\newcommand{\reviewAdd}[1]{#1}
\newcommand{\reviewDel}[1]{}
\newcommand{\etal}{$\;$\textit{et\,al.}} 
\newcommand{\Real}{{\rm I\!R}}
\newcommand{\nextnr}{\stepcounter{AlgoLine}\ShowLn}
\DeclareMathOperator*{\argmin}{argmin}
\DeclareMathOperator*{\argmax}{argmax}
\newcommand*\widefbox[1]{\fbox{\hspace{1em}#1\hspace{1em}}}
\newcolumntype{C}[1]{>{\centering\let\newline\\\arraybackslash\hspace{0pt}}m{#1}}
\newcolumntype{R}[1]{>{\raggedleft\let\newline\\\arraybackslash\hspace{0pt}}m{#1}}
\newcommand*\colvec[1]{
        \global\colveccount#1
        \begin{pmatrix}
        \colvecnext
}
\newcommand{\colvecnext}[1]{
		#1
        \global\advance\colveccount-1
        \ifnum\colveccount>0
                \\
                \expandafter\colvecnext
        \else
                \end{pmatrix}
        \fi
}
\begin{document}
\normalem
\title{A Dynamic Game Approach for Demand-Side Management\thanks{This work was supported by the Doctoral Training Alliance (DTA) Energy.}
}
\subtitle{Scheduling Energy Storage with Forecasting Errors}

\author{Matthias Pilz         \and
        Luluwah Al-Fagih 
}

\institute{M.Pilz and L.Al-Fagih \at
              Kingston University London, Penrhyn Road, Kingston upon Thames, UK \\
              \email{Matthias.Pilz@kingston.ac.uk}  
}

\date{}

\maketitle

\begin{abstract} 
Smart metering infrastructure allows for two-way communication and power transfer. Based on this promising technology, we propose a demand-side management (DSM) scheme for a residential neighbourhood of prosumers. Its core is a discrete time dynamic game to schedule individually owned home energy storage. The system model includes an advanced battery model, local generation of renewable energy, and forecasting errors for demand and generation. 

We derive a closed-form solution for the best-response problem of a player and construct an iterative algorithm to solve the game. Empirical analysis shows exponential convergence towards the Nash equilibrium. A comparison to a DSM scheme with a static game, reveals the advantages of the dynamic game approach. We provide an extensive analysis on the influence of the forecasting error on the outcome of the game. A key result demonstrates that our approach is robust even in the worst-case scenario. This grants considerable gains for the utility company organising the DSM scheme and its participants.

\keywords{Dynamic Game \and Smart Grid \and Demand-Side Management \and Energy Storage \and Battery Modelling \and Uncertainty \and Game Theory}
\end{abstract}

\section{Introduction}
\label{sec:1_introduction}
Climate change poses a serious threat to the global ecosystem. To limit the increase of global average temperatures, it is critical to restrict greenhouse gas emissions. Currently, burning fossil fuels accounts for the largest share of CO$_2$ emissions by humans into the atmosphere. Renewable energy sources, such as wind and solar, have a much smaller carbon footprint and should be employed instead~\cite{Panwar2011}. Due to the intermittent nature of these sources, their integration into the power system can be a challenging task. Our research investigates possibilities for more efficient and environment friendly access to electricity by means of energy storage and renewable energy generation.

The concept rests upon the implementation of a technologically advanced power grid. In contrast to the current power grid, this smart grid features two-way communication and power transfer between the utility company (UC) and individual households~\cite{Ipakchi2009}. Its decentralised nature is expressed through distributed generation and storage of energy, with individual households capable of doing both.
These households are called prosumers (combination of \textit{pro}ducer and con\textit{sumer}). Moreover, the deployment of smart meters allows households to accurately measure electricity demands in real-time. This permits the implementation of demand-side management (DSM) schemes. Within such schemes, the UC incentivises users to avoid consumption during peak hours by means of dynamic pricing tariffs. These tariffs determine the price per energy unit based on the aggregated load of all users (cf.~\cite{Soliman2014,Ma2017,Celik2017}). This will eventually allow them to reduce investments into fast ramping technologies, needed otherwise.

In~\cite{Soliman2014,Celik2017,Mohsenian-Rad2010,Yaagoubi2015b,Longe2017} consumers react to these price incentives by rescheduling their appliances\reviewAdd{, thus potentially interfering with their habits}.
Among them, \cite{Soliman2014,Celik2017,Longe2017} additionally model the usage of energy storage systems. All of these users are aiming at a reduction of the peak-to-average ratio (PAR) of the aggregated load, since achieving this eventually translates into financial benefits for the participants. 
The methods of choice to obtain the desired schedules are almost always based on game--theoretic concepts. Only~\cite{Longe2017} deviates by using convex optimisation.
Since the DSM scheme directly influences the routines of the users, their comfort levels play an important role. For instance, Yaagoubi \textit{et al.}~\cite{Yaagoubi2015b} found that when acceptable comfort levels are preserved, the amount of savings from the energy bill reduces by more than half of the optimum. Note that all these studies have the common idea of scheduling the usage of appliances and batteries in a day-ahead manner. 

Day-ahead scheduling that does not interfere with the users can solely be realised through energy storage systems. \cite{Nguyen2015,Pilz2017} followed this approach and showed that considerable gains are achievable without interrupting the habits of the consumers. Nguyen \textit{et al.}~\cite{Nguyen2015} put their focus on developing a distributed algorithm, while Pilz \textit{et al.}~\cite{Pilz2017} implemented an advanced battery model, providing insight into how specific battery characteristics influence the participation behaviour and thus the outcome of the game. 

This work builds on these previous results and extends the approach of~\cite{Pilz2017} in two directions. Firstly, we introduce a more sophisticated underlying game structure for the DSM scheme, namely a discrete time dynamic game. \reviewAdd{Within this formulation the action space is continuous instead of the discrete options available in~\cite{Pilz2017}. As a consequence, the outcome for the players improves as they can make more fine-grained decisions. Another advantage is that this allows for the derivation of a best response strategy and thus does not require a computationally expensive search for the best response. }Secondly, we analyse the influence of the forecasting error for demand and energy generation on the scheduling outcome. \reviewAdd{In order to assure the stability of the grid, a}\reviewDel{A} real-world application requires the mechanism to be resilient against eventual errors in the predictions, as they will undoubtedly occur.\\\newpage

Our contributions are as follows:
\vspace{-\baselineskip}
{\setlength{\extrarowheight}{.5em}
	\begin{longtable}{p{.04\textwidth}p{.9\textwidth}}
	(1) & We introduce a novel discrete time dynamic game for energy storage scheduling among prosumers in the smart grid. The closed form solution to the best-response problem is derived by means of a dynamic-programming approach. The ensuing iterative algorithm converges quickly towards the Nash-equilibrium. \reviewDel{A direct comparison to a static game for DSM reveals the superiority of this approach both in terms of computational costs and achieved PAR reduction.}\reviewAdd{Direct comparison to similar approaches, i.e.~\cite{Nguyen2015,Pilz2017,Yaagoubi2015b}, reveals the superiority both in terms of achieved PAR reduction and computational costs.}\\
	(2) & A complete day-ahead DSM scheme, consisting of prosumers with realistically modelled batteries, local renewable energy sources, and forecasting errors for demand and generation is simulated. In contrast to previous works which merely simulate individual days, our scheduling period covers a full year. The length of the simulation allows for an in-depth analysis of the influence of the forecasting errors as well as the impact of the number of participants in the DSM scheme.\\
	(3) & We show that the proposed dynamic game approach is robust with respect to the forecasting errors, even in the worst-case scenario. The respective results exhibit only small deviations in the PAR reduction outcomes compared to runs with accurate predictions, and hardly any influence on the financial benefits for the DSM participants.\\
	(4) & For the first time, a comparison of how different compositions of neighbourhoods perform in the DSM scheme is presented. We find that a community consisting of a mix of consumer types can achieve best results.
	\end{longtable}%
}
\addtocounter{table}{-1}

This paper is organised as follows. In Section~\ref{sec:2_systemModel}, we give an overview of the system, provide details of the DSM protocol, introduce the battery and the renewable energy model, and explain the pricing tariff. Section~\ref{sec:3_game} contains detailed information about the dynamic game. Furthermore, it includes the derivation of the best-response solution and the description of the iterative algorithm. The simulation parameters and the data sets for demand and generation data are presented in the beginning of Section~\ref{sec:4_results}. Then, we compare our approach to the static game approach of~\cite{Pilz2017}, \reviewDel{and }show the influence of the forecasting errors\reviewAdd{, and investigate the neighbourhood composition}. \reviewDel{This section ends with detailed discussions of all the presented results. }Section~\ref{sec:5_conclusion} concludes the paper and points out future research \reviewDel{aspects}\reviewAdd{directions}.

\section{System Model - A Smart Grid Neighbourhood}
\label{sec:2_systemModel}
In this section, we build the basis to the formulation of the battery scheduling game presented in Section~\ref{sec:3_game}. We introduce the concept of a smart grid neighbourhood that participates in a demand-side management (DSM) program to reduce their electricity bills. Each of the participants is equipped with an individually owned lithium-ion battery in addition to a photovoltaic (PV) cell which generates electricity. Models for both the battery and the PV cell are stated in detail. Moreover, we clarify the specific smart meter infrastructure that is necessary to implement the DSM program, as well as the role of the single utility company (UC) running this program.

\subsection{Neighbourhood and Demand-Side Management Program}
\label{sec:smartGrid}
Consider a residential neighbourhood comprised of $M$ houses. Each of these is equipped with a smart meter. 
\begin{figure}[t]
	\centering
	\includegraphics[width = \linewidth]{./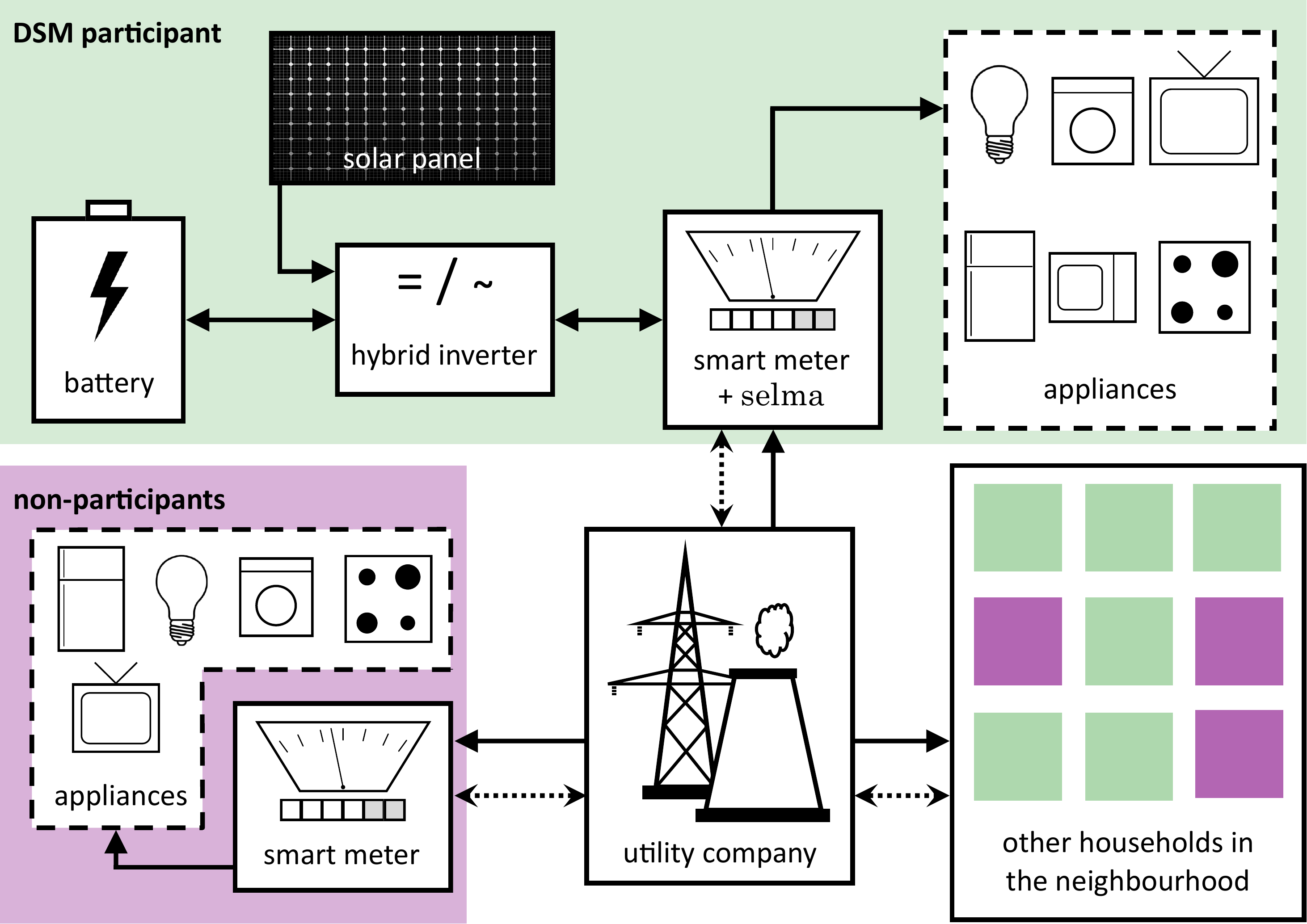}
	\caption{\textit{Systematic sketch of the neighbourhood.} Solid arrows stand for power flow, while dotted arrows stand for information flow.
The top half of the figure represents a participating household of the DSM scheme. It is equipped with a lithium-ion battery and a solar panel. The solar panel can directly charge the battery, but to run any appliance its direct current needs to be converted to alternating current by the inverter. The smart meter collects data and executes the schedule obtained from the \reviewAdd{author's} scheduling software \texttt{selma}\reviewAdd{, which is based on a dynamic game.}
A non-participating household is depicted in the bottom left corner. It is also equipped with a smart meter, collecting data and communicating with the utility company. The complete neighbourhood consists of a number of households (cf.~bottom right) that belong to either one of the shown categories. All of them are served by the same utility company.}
	\label{fig:system}
\end{figure}
Smart meters are capable of measuring electricity consumption accurately and at a higher frequency than the usual monthly or quarterly readings. Furthermore, these devices can communicate directly with the utility company. This eventually allows for the implementation of the DSM program, and also eliminates the need for on-site readings. For our proposed model, we assume that we are able to obtain readings in regular intervals. Based on the reading-frequency, we split each day into $T$ discrete intervals and denote the set of all intervals by $\mathcal{T}$. 

We assume that the $M$ houses are served by the same UC. In order to incentivise consumers to participate in the DSM scheme, the UC offers them a specific pricing scheme, which eventually reduces their electricity bills. Details can be found in Section~\ref{sec:utility}. Let us denote the set of households who participate in the DSM program by $\mathcal{N}\subset\mathcal{M}$, where $\mathcal{M}$ is the set of all households in the neighbourhood. The total number of participants is $N=|\mathcal{N}|$. Besides the different pricing scheme, the participants of the DSM possess their own battery storage system and have solar panels installed. An overview of the neighbourhood is given in Figure~\ref{fig:system}. \\

The DSM scheme can be seen as a protocol, which is gone through repeatedly. In our study the protocol is run once per day. Note that this is a completely automated process run by our scheduling software \texttt{selma} (short for: Scheduler for Electricity in Local MArkets), which needs to be installed on a consumer access device given to each participant of the scheme. \reviewAdd{The algorithm to obtain the schedules is based on a discrete time dynamic game, which will be introduced in Section~\ref{sec:game}.}

Before the start of each scheduling period, \texttt{selma} forecasts the demand\footnote{As of today, the forecasting module is not included into \texttt{selma}. For this study we assume this information to be given and refer the reader to~\cite{Bichpuriya2016,Dolara2015} for details on demand forecasting.} of the respective household for each interval $t\in\mathcal{T}$ of the upcoming day. This information is sent to the UC. The smart meters of non-participants are not able to forecast their own demand. Thus, the UC performs the forecasting step for these households, based on historically collected data. Eventually, forecasted demand curves are aggregated and the information is sent to each DSM participant. Note that no information about individual neighbours is shared, but only aggregated information. This provides anonymity to all consumers.

Based on this input, the households play a dynamic non-cooperative game (cf.~Section~\ref{sec:3_game}). The outcome of the game is a set of schedules, one for each household, which specify how they can make best use of their battery system. The households will follow these schedules throughout the day, even if their actual demand differs from the forecasted one. In Section~\ref{sec:4_results}, we investigate the influence of the forecasting error and show the robustness of the approach even in the worst-case scenario. At the end of the scheduling period, the electricity costs for each consumer is calculated based on the agreed pricing terms and the protocol starts over again.

\subsection{Individual Households}
\label{sec:households}
Households that participate in the DSM scheme are equipped with a lithium-ion battery and PV cells. In this subsection, we introduce the battery model and clarify how the battery can be used. Moreover, details on the PV system are provided. Finally, we clarify the terminology of \textit{demand}, \textit{net-demand} and \textit{load} of a household based on the usage of their battery and PV cells.

\subsubsection{Battery Model and Decision Variables}
\label{sec:battery}
In this paper, we employ the same battery model as used in~\cite{Pilz2017}. This includes charging, discharging, and self-discharging characteristics of a lithium-ion battery. In fact, the same model may also be applied for lead--acid battery systems (but not nickel--based batteries due to their different charging behaviour). As all our simulations are based on a real-world lithium-ion battery system, in the following we will only refer to them as such. 

\paragraph{Charging:}
Lithium-ion batteries are charged in a two--stage process~\cite{Richtek2014}. In the first stage, the state-of-charge (SOC) increases linearly. This stage is called the `constant current' (CC) stage, with a charging rate limited by $\rho^+>0$. In the second stage, i.e.~the `constant voltage' (CV) stage, the effective charging rate levels off exponentially towards the point where the SOC reaches the nominal maximum capacity $s_{\max}$ of the battery. The point of transition from the first stage to the second is indicated by a SOC $s^*$ and an associated time $t^*$, which needs to be specified for the respective battery. During both stages, we additionally consider losses due to the specific charging efficiency $\eta^+$ with $0\leq\eta^+\leq 1$. Additionally, certain losses occur from the hybrid inverter (cf.~Figure~\ref{fig:system}), modelled by $\eta_{\text{inv}}$ with $0\leq\eta_{\text{inv}}\leq 1$. The hybrid inverter transforms the direct current from either the battery or PV into alternating current at usable voltage and frequency for the household appliances. It also works in the reverse direction to charge the battery.

To obtain an insight into how the households can make use of their battery system, let us look at a specific example (cf.~Figure~\ref{fig:charging}). Given a certain value for the SOC, e.g.~$s'$, we can associate a time $t'$, and thus specify a point on the charging curve.
\begin{figure}[t]
	\centering 
	\subfigure[charging\label{fig:charging}]{\includegraphics[width=0.495\linewidth]{./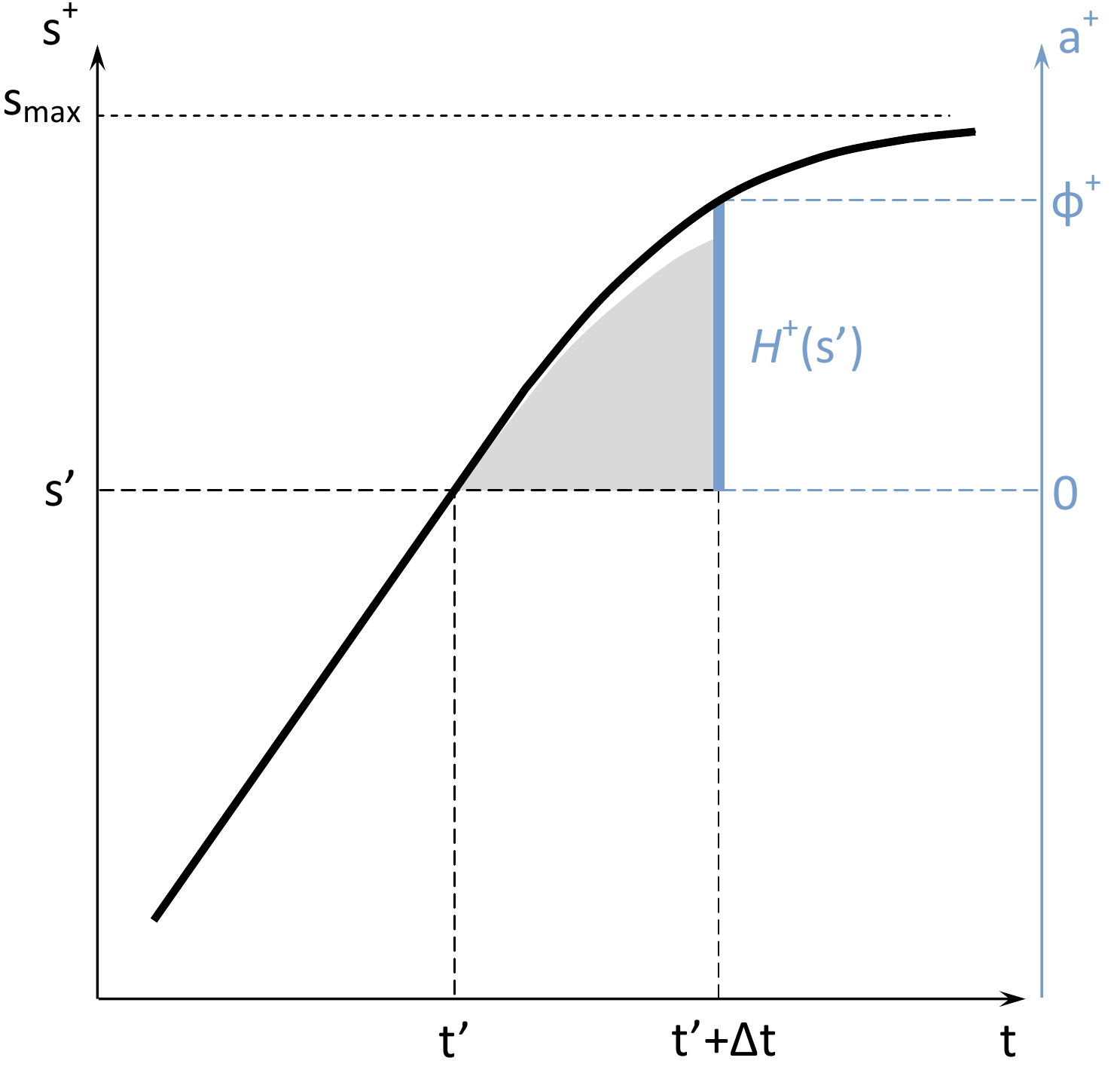}}
	\subfigure[discharging\label{fig:discharging}]{\includegraphics[width=0.495\linewidth]{./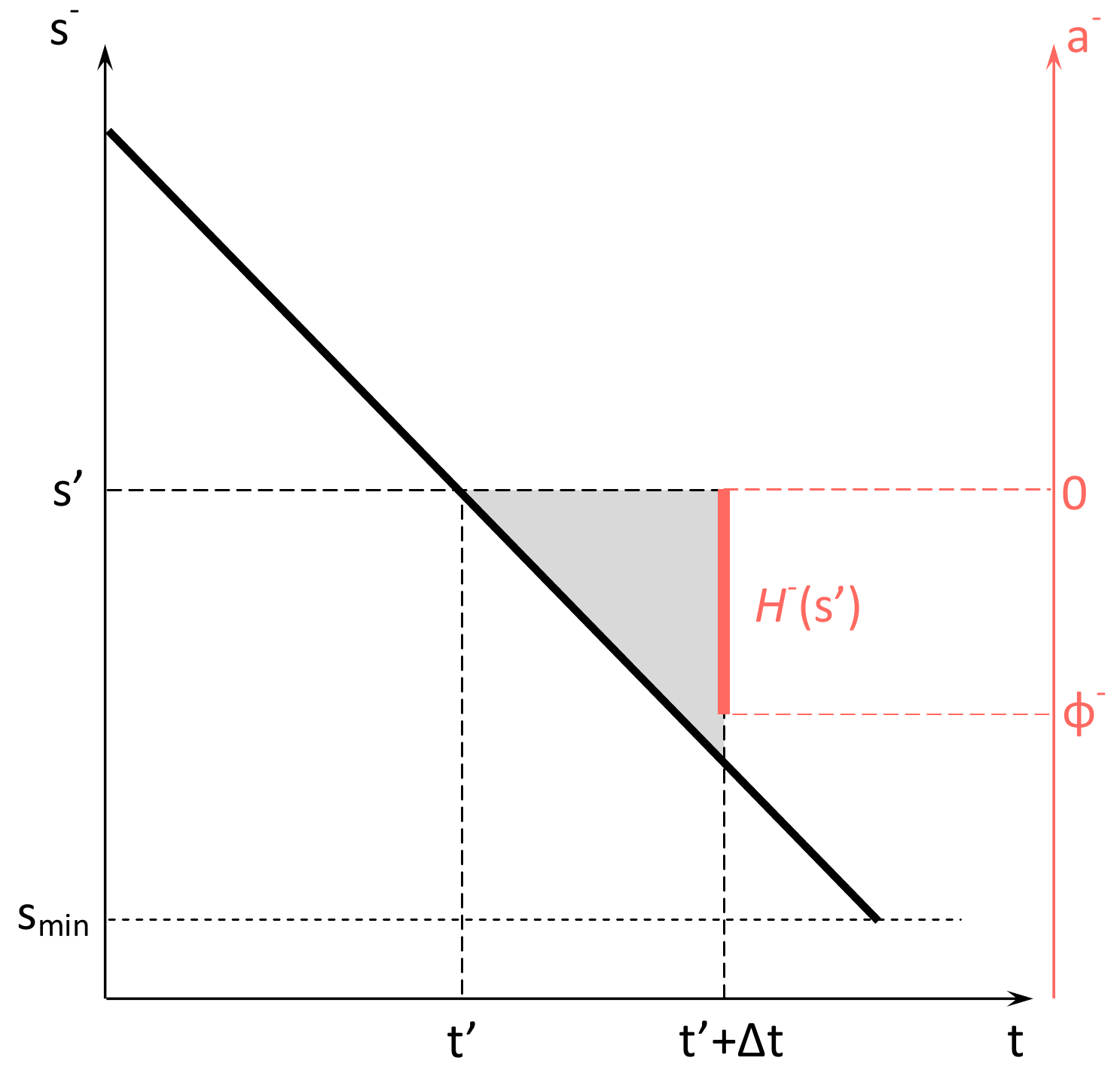}}
	\vspace{-0.5\baselineskip}
\caption{\textit{Schematic illustration of the charging and discharging behaviour of a lithium-ion battery.} The graph on the left (right) shows the characteristic charging (discharging) curve. Given a certain state of charge $s'$, the grey area stands for the achievable state of charge within the following interval. The right axes represent the possible decisions when charging (discharging), where $H^+\left(s'\right)$  ($H^-\left(s'\right)$) summarise the decision intervals. The discrepancy between the achievable state of charge and the decision interval are due to losses, i.e.~imperfect efficiencies while charging (discharging).}
\label{fig:chargingAndDischarging}
\end{figure}

Within the next interval of length $\Delta t$, the decision variable $a^+$ of how much to charge the battery will lie in $\mathcal{H}^+\left(s'\right)=\left\{a^+| h^+\left(s',a^+\right)\leq 0 \right\}$, with
\begin{equation}
	 h^+\left(s',a^+\right) = \colvec{2}{-a^+}{a^+ - \phi^+\left(s'\right)}\ .
	 \label{eqn:charging_restriction}
\end{equation}
In other words, $a^+$ is limited by $0<a^+\leq \phi^+\left(s'\right)<s_{\max}-s'$. We use the notation above to comply with the one shown in~\cite{Nie2006}.
The upper limit $\phi^+\left(s'\right)$ is described by the charging curve, as described above, 
\begin{equation}
	\phi^+\left(s'\right) = \begin{cases}
		\rho^+\Delta t & \text{if CC charged} \\
		s_{\max}\gamma_1\exp\left[-\frac{\Delta t}{\gamma_2}\right] & \text{if CV charged}		
	\end{cases}\ ,
\end{equation}
where $\gamma_1, \gamma_2$ are defined such that the charging curve is smooth at the transition point $(t^*,s^*)$. The discrepancy between the grey--shaded area and the charging curve in Figure~\ref{fig:charging} results from an imperfect charging efficiency. In fact, based on the decision variable $a^+$ the SOC of the battery changes according to the charging transition equation
\begin{equation}
	s\left(t'+\Delta t\right) = s\left(t'\right) + \eta_{\text{inv}}\;\eta^+ a^+\ .
	\label{eqn:chargingTransition}
\end{equation}

\paragraph{Discharging and Self-Discharging:}
We model the discharging behaviour of lithium-ion batteries by a linear decrease in the SOC. Here, the slope is given by the discharging rate $\rho^-<0$. In order to account for the usual sharp drop off of the discharging rate at low capacities, discharging is prohibited below a minimum SOC $s_{\min}$. Again, we also consider losses due to the specific discharging efficiency $\eta^-$ with $0\leq\eta^-\leq 1$ and the hybrid inverter.

In Figure~\ref{fig:discharging} a specific example is given, to clarify how the user can discharge its battery. Within the respective interval, the decision variable $a^-$ of how much to discharge the battery will lie in $\mathcal{H}^-\left(s'\right)=\left\{a^-| h^-\left(s',a^-\right)\leq 0 \right\}$, with
\begin{equation}
	 h^-\left(s',a^-\right) = \colvec{2}{a^-}{-a^- + \phi^-\left(s'\right)}\ .
	 \label{eqn:discharging_restriction}
\end{equation}
In other words, $a^-$ is limited by $s'-s_{\min}<\phi^-\left(s'\right)\leq a^- < 0$ and 
\begin{equation}
	\phi^-\left(s'\right) = \rho^-\Delta t\;\eta_{\text{inv}}\;\eta^-\ .
	\label{eqn:phiMinus}
\end{equation} 
The dependency on $s'$ in \eqref{eqn:phiMinus} is implicitly given by the fact that we cannot go lower than $s_{\min}$. Note that $\phi^-$ also depends on the efficiency parameter, such that the actual amount taken from the battery in correspondence with the decision variable $a^-$ (grey--shaded area in Figure~\ref{fig:discharging}) is given by the discharging transition equation
\begin{equation}
	s\left(t'+\Delta t\right) = s\left(t'\right) + \frac{a^-}{\eta_{\text{inv}}\;\eta^-}\ .
	\label{eqn:dischargingTransition}
\end{equation}
In the following subsection, we will see that $\phi^-$ is additionally limited by the demand of the specific household, i.e.~one can only discharge as much as is needed to run all appliances.

Whenever the battery is neither charging nor discharging, it will be subject to self-discharging. We model this type of behaviour with an exponential decline. This case corresponds to the decision variable $a = 0$. The respective self-discharging transition equation is given by
\begin{equation}
	s\left(t'+\Delta t\right) = s\left(t'\right)\cdot\left(1 + \bar{\rho}\right)^{\Delta t}
	\label{eqn:selfDischargingTransition}
\end{equation}
where $\bar{\rho}<0$ is the self-discharging rate.\\

For later usage (cf.~Section~\ref{sec:game}), we summarise the transition equations for charging, discharging and self-discharging into a single transition equation $f$, i.e.
\begin{equation}
	s(t+\Delta t) = f\left(s(t),a\right) = \begin{cases}
		s(t) + \eta_{\text{inv}}\;\eta^+ a &\ , a>0 \\
		s(t) + \nicefrac{a}{(\eta_{\text{inv}}\;\eta^-)} &\ , a<0 \\
		s(t)\cdot\left(1 + \bar{\rho}\right)^{\Delta t} &\ , a=0
	\end{cases}\ .
	\label{eqn:transSum}
\end{equation}
Furthermore, we combine the restrictions of the decision variable due to the battery restrictions for charging and discharging, i.e.
\begin{equation}
	h(s,a) = \colvec{2}{a - \phi^+\left(s\right)}{-a + \phi^-\left(s\right)}\ .
	\label{eqn:restrictionComb}
\end{equation}

\subsubsection{PV Model}
\label{sec:pv}
We model the solar panel as an additional source of electricity besides the grid connection. The output of the $n$th household's PV system during interval $t$ is denoted by $w^t_n$. It can serve two purposes: (i) direct usage by household appliances, and (ii) charging the battery. Whereas direct usage is influenced by the efficiency of the hybrid inverter, charging the battery does not require any inversion and thus only depends on the charging efficiency of the battery. 

An important parameter of the PV installation is the nominal kilowatt peak $kW\!p$ of the system. It is a measure of the size of the system and denotes the maximum output that can be expected under standardised conditions. A PV system which operates at its maximum capacity, e.g.~$kW\!p =3\;$kW, for one hour will produce $3\;$kWh. Note that identifying the optimal size of the PV installation does not fall within the scope of this article. An approximated scale is obtained from~\cite{Zhang2016d,Olaszi2017} (cf.~Section~\ref{sec:simSetup}).

\subsubsection{Demand, Net-Demand and Load}
\label{sec:demandLoad}
We define the demand $\bar{d}^t_m \geq 0$ of a household $m\in\mathcal{M}$ as the amount of electricity that is needed to run all its appliances during the time interval $t\in\mathcal{T}$. Thus, the total daily demand-schedule can be written as $\bar{d}_m=\left(\bar{d}^0_m,\dots,\bar{d}^{T-1}_m\right)$. Throughout the paper, we assume that the demand cannot be shifted. Thus our approach is fully non-intrusive and does not influence the behaviour of the user. 

Combining the demand $\bar{d}^t_n$ of a household $n\in\mathcal{N}$ with the generated electricity $w_n^t$ from the solar panel, gives the net-demand 
\begin{equation}
	d^t_n = \bar{d}^t_n-\eta_{\text{inv}}\;w^t_n\ , 
	\label{eqn:netDemand}
\end{equation}
where $\eta_{\text{inv}}$ is the efficiency of the inverter (cf.~Figure~\ref{fig:system}). Theoretically, this value can be smaller than zero, i.e.~when the effective generation is larger than the demand in the specific interval. Practically, we ensure $d^t_n \geq 0$ by storing all excess energy directly in the battery. For households $m\not\in\mathcal{N}$, that do not participate in the DSM scheme, the net-demand is identical to the demand.

Let $l_m^t$ denote the load, i.e.~the amount of energy drawn from the grid by household $m\in\mathcal{M}$ during interval $t\in\mathcal{T}$. For households which do not participate in the DSM scheme, the load equals their demand. For the others, the load depends on the decision $a_n^t$ taken at the specific interval. In other words, it combines the net-energy demand with the amount of energy that is charged or discharged by the battery
\begin{equation}
	\l_n^t = d^t_n + a^{t}_n\ ,
	\label{eqn:load}
\end{equation}
where $\max\left\{-d^t_n, \phi^-\right\}\leq a^{t}_n \leq \phi^+$. The lower boundary expresses the fact that one cannot discharge more than is actually needed to fulfil the net-demand, while at the same time all battery restrictions remain valid. Due to this condition and \eqref{eqn:netDemand}, we ensure that $l_m^t\geq 0$ for all $m\in\mathcal{M}$ and all intervals $t\in\mathcal{T}$. We write $l_m=\left(l^0_m,\dots,l^{T-1}_m\right)$ for the schedule of loads of a specific household. Furthermore, we can calculate the total load on the grid for interval $t$ by
\begin{equation}
	L^t = \sum_{m\in\mathcal{M}} l_m^t\ .
	\label{eqn:totalLoad}
\end{equation}
Similarly, we define the average aggregated load of all households other than $n$ during time interval $t$ by $L^t_{-n} = \nicefrac{1}{(M-1)}\sum_{m\in\mathcal{M}\setminus n} l_m^t\ .$

\subsubsection{Forecasting Errors}
\label{sec:error}
The DSM protocol states, that households send a forecast of their net-demand to the UC. This depends on the demand as well as the electricity generated by the solar panel. Both variables will introduce errors, that need to be accounted for. In this paper, we consider the worst-case scenario. \cite{Bichpuriya2016} gives a comprehensive overview of current techniques for short term demand forecasting. \reviewAdd{They specifically investigate how combining forecasts obtained from an integrated auto-regressive moving average, an artificial neural network, and a similar day approach can improve the short term load forecast.}
From~\cite{Bichpuriya2016}, we obtain an upper limit for the forecasting error $\epsilon_d$, expressed as a percentage of the actual demand.
Similarly, \cite{Dolara2015} gives an insight into 24 hour PV power output prediction. The forecasting error $\epsilon_w$ is also given as a percentage of the actual generation.
 
The worst-case scenario is constituted when these two errors carry opposing signs \reviewAdd{and are correlated between all the participants}. This becomes clear from~\eqref{eqn:netDemand}, since both contributions for the net-demand enter with different signs. Intuitively, it makes sense that in the worst-case the forecasted net-demand is smaller than the actual demand. This is because a too small forecasted net-demand does disguise the incentive to make use of the battery system. With the same argument, the worst-case solar forecast is higher than the actual one. It might imply a sufficient SOC of the battery, when in reality more charging would have been necessary.

\subsection{The Utility Company}
\label{sec:utility}
Throughout the paper, we assume a single utility company (UC) serves all the consumers in the neighbourhood. The UC runs a DSM scheme in order to reshape the load profile. To be more precise, they want to achieve a flatter profile such that investments into fast ramping technology, which is needed to deliver peak demand, can be reduced. The incentive for the users to limit consumption during peak hours is given by a dynamic pricing tariff: The cost per energy unit is calculated separately for each interval and depends on the aggregated load of all users in the neighbourhood. Following~\cite{Pilz2017,Mohsenian-Rad2010,Yaagoubi2015b,Nguyen2015}, we employ a quadratic cost function $g^t$:
\begin{equation}
	g^t(y) = c_2\cdot y^2 + c_1\cdot y + c_0\ ,\ \ t\in\mathcal{T}\ ,
	\label{eqn:costFunction}
\end{equation}
where $y$ is the aggregated load at time $t$ given by $L^t$ and the coefficients $c_2>0$, $c_1\geq 0$ and $c_0\geq 0$. Similar to~\cite{Pilz2017,Soliman2014,Mohsenian-Rad2010}, we employ a proportional billing scheme, where each participant of the DMS scheme pays for their share of the consumption, i.e. the electricity bill $B_n$ yields
\begin{equation}
	B_n = -\Omega_n \sum_{t\in\mathcal{T}} g^t\ \ \forall n\in\mathcal{N}\ ,
	\label{eqn:billDSM}
\end{equation}
with
\begin{equation}
	\Omega_n = \frac{\sum_{t}l_n^t}{\sum_{t}\sum_{k}l_k^t}\ .
	\label{eqn:proportionalFactor}
\end{equation}

For households that do not participate in the DSM scheme, a standard fixed-price tariff is employed, i.e.
\begin{equation}
	B_m = p \sum_{t}l^t_m\ \ \forall m\in\mathcal{M}\setminus \mathcal{N}\ .
	\label{eqn:billNoDSM}
\end{equation}

\section{Dynamic Battery Scheduling Game}
\label{sec:3_game}
In this section, we formulate the non-cooperative dynamic game between the households that possess individual energy storage and photovoltaic (PV) installations. To do so, we introduce the relevant notation and relate it to their respective `real-world' meaning according to our system (cf.~Section~\ref{sec:2_systemModel}). Furthermore, the notion of a Nash equilibrium (NE) is defined and an important result concerning the link between the NE for the whole game and the NE for a subgame is provided. Subsequently a dynamic programming algorithm is presented from which we derive a closed form expression of the best response, i.e.~the best decision a player can make in response to fixed decisions of other players. Eventually we use this result to construct an iterative algorithm that computes \reviewDel{the }\reviewAdd{a }NE of the game. 

\subsection{Definitions and Game Formulation}
\label{sec:game}
Formally, the game belongs to the category of discrete time dynamic games (cf.~\cite{Nie2006}), where players make their decisions sequentially in stages. These stages directly correspond to the daily intervals introduced in Section~\ref{sec:smartGrid}. For each stage we define a state of the game, i.e.~the current state-of-charge (SOC) of all batteries, representing the configuration of the overall system. Furthermore, we define a transition equation that models the evolution of this state based on the decisions of the players. In other words, the players will choose actions that are directly related to their battery usage, which in turn depends on the state of the game. We consider a game with open-loop information structure, which means that the initial state of the game is known by all players. In this game, players want to minimise their energy bill, i.e.~their utility function, which depends not only on their own but also on the decisions of all other players. In a nutshell, we have:
\begin{definition}
	Our discrete time dynamic game with open-loop information structure consists of the following components:\\
{\setlength{\extrarowheight}{.5em}
	\begin{longtable}{p{.04\textwidth}p{.9\textwidth}}
	(1) & A set of \textit{players}, i.e.~participating households (cf.~Section~\ref{sec:smartGrid}), $\mathcal{N} = \{1,2,\dots,n,\dots,N\}$, where \reviewAdd{$N$} denotes the number of players.\\
	(2) & A set of \textit{stages}, i.e.~intervals (cf.~Section~\ref{sec:smartGrid}), $\mathcal{T} = \{0,1,\dots,t,\dots,T-1\}$, where \reviewAdd{$T$} denotes the number of stages and thus the number of decisions a player can make in the game. \\
	(3) & Scalar \textit{state variables} $s_n^t\in\mathcal{S}_n\subset\Real$ denoting the SOC of the $n$th player's battery at stage $t\in\mathcal{T}\cup\{T\}$. Collectively, we denote the state variables of all players at stage $t$ by $s^t:=\left(s_1^t,s_2^t,\dots,s_N^t\right)\in\mathcal{S}:=\mathcal{S}_1\times\mathcal{S}_2\times\cdots\times\mathcal{S}_N\subset\Real^N$. In the open-loop information structure it is assumed that the \textit{initial state} $s^0$ is known\footnote{\reviewAdd{Later we will see that the solutions/schedules require the players to deplete their battery towards the end of the scheduling period (cf.~finite horizon effect) to achieve maximum utility. This means as long as none of the players deviates from their respective schedule this knowledge is implicitly shared.}} to all players $n\in\mathcal{N}$.\\
	(4) & Scalar \textit{decision variables} $a_n^t\in\mathcal{H}_n^t\left(s_n^t\right)\subset\mathcal{A}_n\subset\Real$ (for definition of $\mathcal{H}_n^t$ see item (5)) denoting the usage of the battery of the $n$th player at time $t\in\mathcal{T}$. Collectively, we denote the decision variables of all players at stage $t$ by $a^t:=\left(a_1^t,a_2^t,\dots,a_N^t\right)\in\mathcal{A}:=\mathcal{A}_1\times\mathcal{A}_2\times\cdots\times\mathcal{A}_N\subset\Real^N.$ Furthermore we define the \textit{schedule of battery usage} of an individual player $n\in\mathcal{N}$ as a collection of all its decisions in the stages of the game by $a_n:=\left(a_n^0, a_n^1,\dots,a_n^{T-1}\right)$. A \textit{strategy profile} is denoted by $a:=\left( a_1,a_2,\dots,a_N\right)$.\\
	(5) & A set of \textit{admissible decisions} $\mathcal{H}_n\left(s_n^0\right) := \left\{a_n~|~h_n^t\left(s_n^t,a_n^t\right)\leq 0,\ t\in\mathcal{T}\right\}\subset\Real^T$ for the $n$th player. The function $h_n^t\left(s_n^t,a_n^t\right)$ has been defined in \eqref{eqn:restrictionComb} Section~\ref{sec:battery}, capturing the restrictions posed on the battery. We denote $\mathcal{H}_n^t\left(s_n^t\right):= \left\{a_n^t~|~h_n^t\left(s_n^t,a_n^t\right)\leq 0\right\}\subset\Real$\\
	(6) & A \textit{state transition equation}
		\begin{equation}
			s_n^{t+1} = f_n^t\left(s_n^t, a_n^t\right),\ \ t\in\mathcal{T},\ n\in\mathcal{N},
		\label{eqn:stateTransistion}
		\end{equation}
		governing the state variables $\left\{s^t \right\}_{t=0}^T$. The function $f_n^t\left(s_n^t, a_n^t\right)$ is the discretised version of \reviewDel{the }the transition equation \eqref{eqn:transSum} defined in Section~\ref{sec:battery}, showing how a decision of the player influences the state of its battery for the upcoming stage.\\
	(7) & A \textit{stage additive utility function} 
		\begin{equation}
		U_n\left(s_n^0, \left(a_n, a_{-n}\right)\right)=\reviewAdd{-}g_n^T\left(s_n^T\right) \reviewDel{+ }\reviewAdd{- }\sum_{t=0}^{T-1}g_n^t\left(s_n^t, \left(a_n^t, a_{-n}^t\right)\right)
		\end{equation}				
		 for the $n$th player, where $a_{-n}:=\left(a_1,a_2,\dots,a_{n-1},a_{n+1},\dots,a_N\right)$ denotes the decisions of all other players. The function $g_n^t\left(s_n^t, \left(a_n^t, a_{-n}^t\right)\right)$ has been defined in \eqref{eqn:costFunction} Section~\ref{sec:utility} capturing the costs to the $n$th player at the $t$th stage. Note that the utility function depends only on the initial state variable $s_n^0$, since the subsequent states $s_n^t$ are determined by \eqref{eqn:stateTransistion}. The function 
		\begin{equation}
			g_n^T\left(s_n^T\right) = s_n^T
			\label{eqn:gT}
		\end{equation}				 
		 expresses a penalty for the $n$th player that is incurred by ending up in state $s_n^T$, i.e.~its SOC, at the end of the scheduling period.
	\end{longtable}%
}
\addtocounter{table}{-1}
\end{definition}
We represent the decision problem of the $n$th player as the following optimisation problem:
\begin{empheq}[box=\fbox]{equation}
\label{eqn:dynamicGame}
\begin{split}
  G_n\left(a_{-n}\right) \hspace{3cm}	& \makebox[0pt][l]{\text{given }}\phantom{\text{subject to }}\ s^0\in\mathcal{S} \\
  										& \makebox[0pt][l]{$\underset{a_n}{\text{\reviewDel{minimise }\reviewAdd{maximise }}}$}\phantom{\text{subject to }}\ U_n\left(s_n^0,\left(a_n, a_{-n}\right)\right)\\
  										& \text{subject to }\ a_n^t\in\mathcal{H}_n^t\left(s_n^t\right)\\
  										& \hphantom{\text{subject to }}\ s_n^{t+1} = f_n^t\left(s_n^t, a_n^t\right)
\end{split}
\end{empheq}
Moreover, the game is referred to as $\left\{ G_1,G_2,\dots,G_N\right\}$\reviewAdd{.}
\begin{definition}
	A strategy profile $\hat{a}=\left(\hat{a}_1,\dots,\hat{a}_N\right)$ is a \textit{Nash equilibrium} for the game $\left\{ G_1,\dots,G_N\right\}$ if and only if for all players $n\in\mathcal{N}$ we have
	\begin{equation}
		U_n\left(s_n^0,\left(\hat{a}_n,\hat{a}_{-n}\right)\right) \reviewDel{\leq }\reviewAdd{\geq }U_n\left(s_n^0,\left(a_n,\hat{a}_{-n}\right)\right),\ \ \forall a_n\in\mathcal{H}_n\left(s_n^0\right)\ .
	\end{equation}
\end{definition}

\subsection{Analysis of the Game}
In order to analyse the game $\left\{G_1,\dots,G_N\right\}$, we follow the dynamic programming (DP) idea by Nie\etal~\cite{Nie2006}. To do so, we introduce notation for subproblems of \eqref{eqn:dynamicGame}. Furthermore, we show an important result about Nash equilibria for these subproblems, which constitutes the basis for the DP-algorithm. Applying the general algorithm eventually leads us to an analytic formulation of the $n$th player's best response $\hat{a}_n$, given the strategies $a_{-n}$ of other players at stage $t$ of a $T$-stage game.

\subsubsection{Subgame Formulation}
For subproblems that are only interested in decisions taken from stage $t\reviewAdd{'}$ onwards, we write:
\begin{gather*}
	s_n^{t\reviewAdd{'},T-1}:=\left(s_n^{t\reviewAdd{'}},\dots,s_n^{T-1}\right),\ \ s^{t\reviewAdd{'},T-1}:=\left(s^{t\reviewAdd{'}},\dots,s^{T-1}\right)\\
	a_n^{t\reviewAdd{'},T-1}:=\left(a_n^{t\reviewAdd{'}},\dots,a_n^{T-1}\right),\ \ a^{t\reviewAdd{'},T-1}:=\left(a^{t\reviewAdd{'}},\dots,a^{T-1}\right)\\
	U_n^{T-t\reviewAdd{'}}\left(s_n^{t\reviewAdd{'}}, \left(a_n^{t\reviewAdd{'},T-1}, a_{-n}^{t\reviewAdd{'},T-1}\right)\right)=\reviewAdd{-}g_n^T\left(s_n^T\right) \reviewDel{+ }\reviewAdd{- }\sum_{\tau=t\reviewAdd{'}}^{T-1}g_n^\tau\left(s_n^\tau, \left(a_n^\tau, a_{-n}^\tau\right)\right)\\
	\mathcal{H}_n^{t\reviewAdd{'},T-1}\left(s_n^{t\reviewAdd{'}}\right):= \left\{a_n^{t\reviewAdd{'},T-1}~|~h_n^\tau\left(s_n^\tau,a_n^\tau\right)\leq 0,\ \tau=t\reviewAdd{'},t\reviewAdd{'}+1,\dots,T-1\right\}\ .
\end{gather*}
For $t\reviewAdd{'}\in\mathcal{T}$ we define a subproblem of the $n$th player as the following optimisation problem:
\begin{empheq}[box=\fbox]{equation}
\label{eqn:dynamicSubGame}
\begin{split}
  G_n^{T-t\reviewAdd{'}}\left(a_{-n}^{t\reviewAdd{'},T-1}\right) \hspace{2cm}	& \makebox[0pt][l]{\text{given }}\phantom{\text{subject to }}\ s^{t\reviewAdd{'}}\in\mathcal{S} \\
  										& \makebox[0pt][l]{$\underset{a_n}{\text{\reviewDel{minimise }\reviewAdd{maximise }}}$}\phantom{\text{subject to }}\ U_n^{T-t\reviewAdd{'}}\left(s_n^{t\reviewAdd{'}},\left(a_n^{t\reviewAdd{'},T-1},a_{-n}^{t\reviewAdd{'},T-1}\right)\right)\\
  										& \text{subject to }\ a_n^{t\reviewAdd{'},T-1}\in\mathcal{H}_n^{t\reviewAdd{'},T-1}\left(s_n^{t\reviewAdd{'}}\right)\\
  										& \hphantom{\text{subject to }}\ s_n^{t\reviewAdd{'}+1} = f_n^{t\reviewAdd{'}}\left(s_n^{t\reviewAdd{'}}, a_n^{t\reviewAdd{'}}\right)
\end{split}
\end{empheq}
Therefore, the subgame is referred to as $\left\{ G_1^{T-t\reviewAdd{'}},G_2^{T-t\reviewAdd{'}},\dots,G_N^{T-t\reviewAdd{'}}\right\}$.
\begin{theorem}
\label{thm:subgameNE}
	Let $\hat{a}=\left(\hat{a}^{0},\dots,\hat{a}^{T-1}\right)$ constitute a Nash equilibrium for the game $\left\{G_1, \dots, G_N\right\}$ with the corresponding trajectories of states $\hat{s}=\left(\hat{s}^{0},\dots,\hat{s}^{T}\right)$. Consider the subgame $\left\{ G_1^{T-t},\dots,G_N^{T-t}\right\}$ for each $t\in\mathcal{T}$. Then, the truncated strategy $\hat{a}^{t,T-1}=\left(\hat{a}^{t},\hat{a}^{t+1},\dots,\hat{a}^{T-1}\right)$ comprises a Nash equilibrium for the subgame $\left\{ G_1^{T-t},\dots,G_N^{T-t}\right\}$.
\end{theorem}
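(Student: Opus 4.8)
The plan is to argue by contradiction, exploiting the stage-additive structure of $U_n$ together with the fact that each player's state trajectory is governed by a transition equation \eqref{eqn:stateTransistion} that couples only that player's \emph{own} state and own decision. Fix $t\in\mathcal{T}$ and consider the subgame $\{G_1^{T-t},\dots,G_N^{T-t}\}$ whose initial state is the equilibrium state $\hat{s}^t$. Suppose, for contradiction, that the truncated profile $\hat{a}^{t,T-1}$ is \emph{not} a Nash equilibrium for this subgame. Then some player $n$ admits a profitable deviation: there exists $\tilde{a}_n^{t,T-1}\in\mathcal{H}_n^{t,T-1}(\hat{s}_n^t)$ with
\begin{equation*}
  U_n^{T-t}\!\left(\hat{s}_n^t,\left(\tilde{a}_n^{t,T-1},\hat{a}_{-n}^{t,T-1}\right)\right) > U_n^{T-t}\!\left(\hat{s}_n^t,\left(\hat{a}_n^{t,T-1},\hat{a}_{-n}^{t,T-1}\right)\right).
\end{equation*}

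Next I would lift this subgame deviation to the full game by splicing, defining the full-horizon strategy
\begin{equation*}
  \tilde{a}_n := \left(\hat{a}_n^0,\dots,\hat{a}_n^{t-1},\tilde{a}_n^t,\dots,\tilde{a}_n^{T-1}\right),
\end{equation*}
while keeping all opponents at $\hat{a}_{-n}$. The crucial observation is that, because $s_n^{\tau+1}=f_n^\tau(s_n^\tau,a_n^\tau)$ depends only on player $n$'s own previous state and decision, the first $t$ decisions $\hat{a}_n^0,\dots,\hat{a}_n^{t-1}$ reproduce exactly the equilibrium state trajectory $\hat{s}_n^0,\dots,\hat{s}_n^t$; in particular the state entering stage $t$ is precisely $\hat{s}_n^t$. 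This guarantees that $\tilde{a}_n$ is admissible in the full game: the constraints $h_n^\tau\le 0$ for $\tau<t$ hold because $\hat{a}_n$ is admissible, and those for $\tau\ge t$ hold because $\tilde{a}_n^{t,T-1}\in\mathcal{H}_n^{t,T-1}(\hat{s}_n^t)$ and the state at stage $t$ coincides with the one defining that admissible set.

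I would then split the full utility along its stage-additive decomposition,
\begin{equation*}
  U_n\!\left(s_n^0,(a_n,\hat{a}_{-n})\right) = -\sum_{\tau=0}^{t-1} g_n^\tau\!\left(s_n^\tau,(a_n^\tau,\hat{a}_{-n}^\tau)\right) + U_n^{T-t}\!\left(s_n^t,\left(a_n^{t,T-1},\hat{a}_{-n}^{t,T-1}\right)\right),
\end{equation*}
and evaluate it at $a_n=\tilde{a}_n$ and at $a_n=\hat{a}_n$. Since the two strategies coincide on stages $0,\dots,t-1$, share the identical state trajectory there, and face the same opponent decisions $\hat{a}_{-n}^\tau$, the finite sum over $\tau<t$ is the same for both; the two utilities therefore differ only through their $U_n^{T-t}$ terms, both evaluated at the common state $\hat{s}_n^t$. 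The assumed strict inequality transfers directly, giving $U_n(s_n^0,(\tilde{a}_n,\hat{a}_{-n})) > U_n(s_n^0,(\hat{a}_n,\hat{a}_{-n}))$ with $\tilde{a}_n\in\mathcal{H}_n(s_n^0)$, contradicting the Nash-equilibrium property of $\hat{a}$ for $\{G_1,\dots,G_N\}$.

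I expect the main obstacle to be the admissibility bookkeeping rather than the inequality itself: one must verify carefully that the spliced strategy lands in $\mathcal{H}_n(s_n^0)$, and this hinges entirely on the decoupled form of the transition equation, which ensures the state reached at stage $t$ is unaffected by a deviation confined to stages $\ge t$. Once the state consistency $s_n^t=\hat{s}_n^t$ is established, the stage-additivity makes the utility comparison immediate.
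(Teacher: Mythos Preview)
Your proposal is correct and follows essentially the same contradiction-plus-splicing argument as the paper's proof: assume the truncated profile is not a subgame NE, lift the profitable deviation to the full horizon by concatenating it with $\hat{a}_n^{0,t-1}$, and use stage-additivity to push the strict inequality up to the full game, contradicting the NE property of $\hat{a}$. Your treatment is in fact slightly more careful than the paper's, since you make explicit the admissibility check (via the decoupled transition $s_n^{\tau+1}=f_n^\tau(s_n^\tau,a_n^\tau)$ guaranteeing $s_n^t=\hat{s}_n^t$), which the paper leaves implicit.
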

\begin{proof}
	\reviewDel{original proof}\reviewAdd{The proof can be found in the Appendix~\ref{sec:thmProof}. \qed}
\end{proof}

\subsubsection{The DP-Algorithm and Derivation of the Best Response Solution}
Based on the results of the previous subsection, we can formulate the following DP-algorithm to find the solution to the decision problem $G_n(a_{-n})$ \eqref{eqn:dynamicGame}, i.e.~the optimal decision for the $n$th player given the decisions $a_{-n}$ of the other players.

\LinesNotNumbered
\begin{algorithm}[thb]
\caption{DP algorithm for player $n\in\mathcal{N}$ to find the solution to \eqref{eqn:dynamicGame}.}
\label{alg:DP_bestResponse}
\KwInput{$T$, $a_{-n}$, $s^0$} 
\kern-6pt
\hrulefill\\
$t\leftarrow T$ \\
\nextnr \label{algLN:for_sT}
\For{\normalfont \textbf{each} $s^T\in\mathcal{S}$}{
	$V_n^0(s^T_n) \leftarrow g_n^T\left(s_n^T\right)$\\
\nextnr \label{algLN:while}
	\While{$t>0$}{
		$t\leftarrow t-1$\\
\nextnr	\label{algLN:for_st}
		\For{\normalfont \textbf{each} $s^t\in\mathcal{S}$}{
			\begin{align*}	
				\hat{a}_n^{t}\leftarrow \reviewDel{\argmin}\reviewAdd{\argmax}_{a_n^{t}\in\mathcal{H}_n^{t}\left(s_n^t\right)}\reviewAdd{-}g_n^t\left(s_n^t, \left(a_n^t, a_{-n}^t\right)\right) \reviewDel{+ }\reviewAdd{- }V_n^{T-t-1}\left(f_t\left(s_n^t, a_n^t\right)\right)\\
				V_n^{T-t}(s_n^t) \leftarrow \reviewDel{\min}\reviewAdd{\max}_{a_n^{t}\in\mathcal{H}_n^{t}\left(s_n^t\right)}\reviewAdd{-}g_n^t\left(s_n^t, \left(a_n^t, a_{-n}^t\right)\right) \reviewDel{+ }\reviewAdd{- }V_n^{T-t-1}\left(f_t\left(s_n^t, a_n^t\right)\right)
			\end{align*}
		}
	}
}
\kern-6pt
\hrulefill\\
\KwOutput{$\hat{a}_n$}
\end{algorithm}
Let us apply Algorithm~\ref{alg:DP_bestResponse} to obtain the result to the decision problem $G_n(a_{-n})$ \eqref{eqn:dynamicGame} in closed form. Note that both for-loops (line~\ref{algLN:for_sT} and line~\ref{algLN:for_st}) are treated implicitly by keeping $s^T$ and $s^t$ unspecified throughout the computations. 

Given the total scheduling length $T$, the aggregated decisions $a_{-n}$ of all other players, and the initial SOC $s^0$ of the batteries, at the first step ($t=T$) we set $V^0_n(s^T_n)=s^T_n$ according to \eqref{eqn:gT}. With this we enter the while-loop (line~\ref{algLN:while}) which overwrites $t$ to now represent $t=T-1$. We solve for the best decision $\hat{a}^{T-1}_n$ by solving the following problem
\begin{equation*}
\begin{split}
	\hat{a}^{T-1}_n = \reviewDel{\argmin}\reviewAdd{\argmax}_{a_n^{T-1}}\  &\overbrace{\reviewAdd{-}c_2\left( d_n^{T-1} + a^{T-1}_n + L_{-n}^{T-1}\right)^2 \reviewDel{+}\reviewAdd{-} c_1\left( d_n^{T-1} + a^{T-1}_n + L_{-n}^{T-1}\right) \reviewDel{+}\reviewAdd{-} c_0}^{g_n^{T-1}} \\ &\reviewDel{+}\reviewAdd{-} \underbrace{\vphantom{\left(d_n^T\right)^2}  s^{T-1}_n\reviewDel{+}\reviewAdd{-}a^{T-1}_n}_{V^0_n}
\end{split}
\end{equation*}
where we made use of the transition equation~\eqref{eqn:stateTransistion} to rewrite $V^0_n$. The solution is computed as 
\begin{equation*}
	\hat{a}^{T-1}_n = -s_n^{T-1}\ ,
\end{equation*}
and subsequently we have
\begin{equation*}
	V^1_n = c_2\left( d_n^{T-1} -s_n^{T-1} + L_{-n}^{T-1}\right)^2 + c_1\left( d_n^{T-1} -s_n^{T-1} + L_{-n}^{T-1}\right) + c_0\ .
\end{equation*}
With this, the first step is done and we again overwrite $t$ to now represent $t=T-2$. In this stage we solve the following problem
\begin{equation*}
	\begin{split}
		\hat{a}^{T-2}_n = \reviewDel{\argmin}\reviewAdd{\argmax}_{a_n^{T-2}}\ &\reviewAdd{-}g_n^{T-2}\left(s_n^{T-2}, \left(a_n^{T-2}, a_{-n}^{T-2}\right)\right)\\ 
		&\reviewDel{+}\reviewAdd{-} c_2\left( d_n^{T-1} - \left[s_n^{T-2}+a_n^{T-2}\right] + L_{-n}^{T-1}\right)^2\\ &\reviewDel{+}\reviewAdd{-} c_1\left( d_n^{T-1} - \left[s_n^{T-2}+a_n^{T-2}\right] + L_{-n}^{T-1}\right) \reviewDel{+}\reviewAdd{-} c_0\ .
	\end{split}
\end{equation*}
The solution is computed as 
\begin{equation*}
	\hat{a}^{T-2}_n = \frac{1}{2}\left(d_n^{T-1}-d_n^{T-2} - s_n^{T-2} + L_{-n}^{T-1} - L_{-n}^{T-2}  \right)\ ,
\end{equation*}
from which we obtain
\begin{equation*}
\begin{split}
	V^2_n &= \frac{c_2}{2}\left(d_n^{T-1}-d_n^{T-2} - s_n^{T-2} + L_{-n}^{T-1} - L_{-n}^{T-2}  \right)^2 \\
		&+ c_1\left(d_n^{T-1}-d_n^{T-2} - s_n^{T-2} + L_{-n}^{T-1} - L_{-n}^{T-2}  \right) + 2c_0\ ,
\end{split}
\end{equation*}
finalising the second step. This procedure can be done for all subsequent steps. As the equations increase quickly in size, they become infeasible to quote here. Fortunately though, our calculations provided insight into recurring patterns, which all the solutions seem to follow. Eventually, the solution for an arbitrary stage $t$ of the $T$-stage dynamic game can be written as
\begin{empheq}[box=\widefbox]{equation}
\label{eqn:solutionBestResponse}
\begin{split}
	\vphantom{\left[ \sum_{\tau=t+1}^{T^{T^{T^T}}}\right]}\hat{a}^t_n = \frac{1}{T-t}\left[ \sum_{\tau=t+1}^{T-1}\left(d^\tau_n + L^\tau_{-n}\right) - s^t_n - \left(T-t-1\right)\left(d^t_n + L^t_{-n}\right) \right]
\end{split}
\end{empheq}
\reviewAdd{Note that during the derivation the non-linear battery constraints are not strictly considered. Similar to the forecasting errors (cf.~Section~\ref{sec:algoAndSchedule}), these are considered in our simulation when the equilibrium schedules are actually executed.}

\subsection{The Algorithm\reviewAdd{ and Execution of NE schedules}}
\label{sec:algoAndSchedule}
Similar to~\cite{Pilz2017}, we make use of a best-response algorithm (cf. Algorithm~\ref{alg:bestResponse}) to find the solution to the game. 
\begin{algorithm}[thb]
  \caption{Best-response algorithm for finding a pure NE based on \cite{Shoham2009}}
    \label{alg:bestResponse}\KwInput{$T$, $s^0$} 
\kern-6pt
\hrulefill\\
		initialise random strategy profile $a=(a_n,a_{-n})$ \\
		\nextnr\label{alg:BR_while}	
		\While{there exists a player $n$ for whom $a_n$ is not a best response to $a_{-n}$}{
		\nextnr\label{alg:BR_forN}	
		\For{\normalfont \textbf{each} $n\in\mathcal{N}$}{
		\nextnr	\label{alg:BR_forT}	
			\For{\normalfont \textbf{each} $t\in\mathcal{T}$}{
				$\hat{a}_n^t \leftarrow$ best response to $a_{-n}$ based on~\eqref{eqn:solutionBestResponse}
			}	
        	$a_n \leftarrow \left(\hat{a}_n^0,\dots,\hat{a}_n^{T-1} \right)$	
		}		    
	}
\kern-6pt
\hrulefill\\
\KwOutput{$\hat{a}$}
\end{algorithm}
Whereas in~\cite{Pilz2017} an extensive search for optimal schedules $\hat{a}_n$ was performed, here we can compute the best response for each stage (line~\ref{alg:BR_forT}) analytically by means of~\eqref{eqn:solutionBestResponse} and concatenate the results to obtain the optimal schedule $\hat{a}_n$ in response to $a_{-n}$. Performing this computation for each player $n\in\mathcal{N}$ (line~\ref{alg:BR_forN}) results in a new strategy profile $a$. We iterate this (line~\ref{alg:BR_while}) as long as ``there exists a player $n$ for whom $a_n$ is not a best response to $a_{-n}$''. In the actual implementation, this check is done by comparing the current strategy profile with the one obtained from the previous iteration. If it did not change, up to machine precision, an equilibrium is reached and $\hat{a}=\left(\hat{a}_n,\hat{a}_{-n}\right)$ constitutes the Nash equilibrium.

Based on the definition of \reviewDel{the }\reviewAdd{a }NE, no household can benefit from unilaterally deviating from its respective schedule. Nonetheless, we have to keep in mind that it is based on forecasted demand and renewable generation. Whenever either the demand or the generation does not match the forecasted value, it might not be possible anymore to strictly follow \reviewDel{the }\reviewAdd{this }NE schedule. In the analysis in the subsequent sections, we assume that \reviewDel{it }\reviewAdd{every individual }always seeks to be as close as possible to the\reviewAdd{ir} \reviewDel{forecasted }\reviewAdd{determined NE }schedule. To illustrate the idea: Imagine \reviewDel{the }\reviewAdd{a }NE schedule of household $n$ \reviewDel{scheduled }\reviewAdd{requires them }to discharge an amount $x$ in a certain interval. Due to a forecasting error for the renewable generation, this has not been charged fully\reviewAdd{ and can thus not be delivered}. In this case, the schedule will discharge as much as possible during this interval. The deviation from the NE will decrease the benefit in terms of PAR reductions and achieved savings for the consumer. \reviewDel{Nevertheless, }\reviewAdd{Anticipating the results, we want to highlight that }in the following section we show that the solution is robust with respect to these deviations\reviewAdd{ and gives considerable improvements in comparison to other approaches in the literature}.

\section{Results and Discussion}
\label{sec:4_results}
In this section, we firstly summarise important simulation parameters and introduce the specific datasets for electricity demand and generation from the photovoltaic (PV) installation. \reviewDel{Secondly, all results are shown with detailed explanations of the individual parameters under investigation. Thirdly, the results are discussed and compared to the literature.
The correctness of the implementation of Algorithm~\ref{alg:bestResponse} is provided in the Appendix~\ref{sec:convResults}.}
\reviewAdd{After analysing the convergence behaviour of the iteration algorithm, }we compare the game--theoretic approach introduced in this manuscript (cf.~Section~\ref{sec:game}) with a simpler non-cooperative static game, revealing the advantages of the dynamic treatment. Subsequently, the analysis of how the participation rate of the DSM scheme and the forecasting errors influence the scheduling outcome is shown. Finally, we consider the influence of the composition of the neighbourhood on the peak-to-average ratio (PAR) reduction.
This is an important measurement of the effectiveness of the DSM scheme. We consider the PAR of the aggregated electricity load \eqref{eqn:totalLoad} over the respective scheduling period. It is defined by
\begin{equation}
	\text{PAR} = T\cdot\frac{\max_{t\in\mathcal{T}}L^t}{\sum_{t\in\mathcal{T}}L^t}\ .
	\label{eqn:PAR}
\end{equation}

\subsection{The Simulation Setup}
\label{sec:simSetup}
In the real-world application, the smart meter of individual households collects data about electricity demand and generation from the available PV installation. As specified in Section~\ref{sec:smartGrid}, the demand-side management (DSM) protocol requires participants to send forecasts of the demand and generation to the utility company. These forecasts are based on historically collected data. In order to run our simulations, we omit this forecasting step and rather make use of two publicly available data sets.

\paragraph{Demand data:}
The demand data stem from the openei dataset~\cite{Openei2013a}. It contains 365 days of simulated hourly data\footnote{We make use of $T=24$ for all simulations, if not stated otherwise.} for households in TMY3-locations in the USA~\cite{Nrel2015}. The building models used for this simulation can be found in~\cite{Openei2013}. Based on an additional survey, all buildings are put into one of three different categor\reviewDel{y}\reviewAdd{ies}. They differ with respect to their overall consumption. Following~\cite{Openei2013a}, we refer to them as LOW, BASE and HIGH consumers. For all simulation runs, we picked the same $M=25$ households, in close vicinity to each other, to represent our neighbourhood. With respect to their consumption categories, we have seven LOW, nine BASE and nine HIGH users. 

\paragraph{PV data:}
Data for the PV generation are based on real-world measurements~\cite{PowerNetworks2014} in the UK. They contain hourly values for days between September 2013 and October 2014. Note that latitude and climate zone of the measurement location are similar to the ones of the demand data. \reviewDel{In }\reviewAdd{Under the }assumption that the weather for all households in the neighbourhood is the same, we use data from the same site for each of them. An estimate for the $kW\!p$ value is obtained from looking at the highest hourly output in the course of a whole year. Its value is $w_{\max}=3.7\;$kWh, which is why we assume $kW\!p\approx 4\;$kW. We account for different sizes of PV installations by scaling the data set with a household specific factor $p_{n}$. About $6\%$ of the collected data was corrupted. We set all these values to $w=0.0\;$kWh. This does not pose any problem for our simulation results, but can be seen as realistic failures of the installation. 

\paragraph{Battery and pricing parameters:} The parameters of the battery are based on the Tesla Powerwall 2~\cite{Tesla2017} data sheet. The choice to employ this battery system is motivated by two reasons: (i) The same battery was used in~\cite{Pilz2017}, allowing for a direct comparison of the results. (ii) A non-extensive analysis of different battery systems showed that the Tesla Powerwall 2 qualifies as a representative of state-of-the-art technology. Please see the Appendix~\ref{sec:batteryJust} for more details. A summary of the battery parameters can be found in Table~\ref{tab:batPara}. The data sheet only specifies the round-trip efficiency $\eta = \eta^+\cdot\eta^-$ of the battery. Without loss of generality, we assume that charging and discharging contribute equally, yielding $\eta^+ = \eta^- = \sqrt{0.918}$. 

\begin{table}[t]
	\caption{\textit{Battery parameters.} Parameters for a Tesla-inspired~\cite{Tesla2017} home battery storage system.}
	\centering
	\label{tab:batPara}
	\begin{tabular}{cc}
		\toprule
		\textbf{Variable}	& \textbf{Value}	\\
		\midrule
    	$\eta^+$ & $0.958$\\ 	
    	$\eta^-$ & $0.958$\\
    	$\eta_{\text{inv}}$ & $0.960$\\
		$\rho^+$ & $5.0$\;kW/h\\   
    	$\rho^-$ & $-7.0\;$kW/h\\
		$\bar{\rho}$ & $-0.001$\\
    	$s_{\max}$ & $13.5$\;kWh \\
    	$s_{\min}$ & $0.0$\;kWh\\
    	$s^*$ & $9.46$\;kWh\\
		\bottomrule
	\end{tabular}
\end{table}
For the parameters in the cost function~\eqref{eqn:costFunction} we use $c_2 = 0.03125\;$\$/MW$^2$, $c_1=1.0\;$\$/MW, and $c_0=0$, following other studies~\cite{Pilz2017,Rahbar2015}. This allows to \reviewAdd{directly }compare our results. 

\subsection{Convergence Behaviour of the Algorithm}
\label{sec:convResults}
Let us provide an insight into the convergence behaviour of Algorithm~\ref{alg:bestResponse}. 
\begin{figure}[thb]
	\centering
  	\includegraphics[width=0.79\textwidth]{./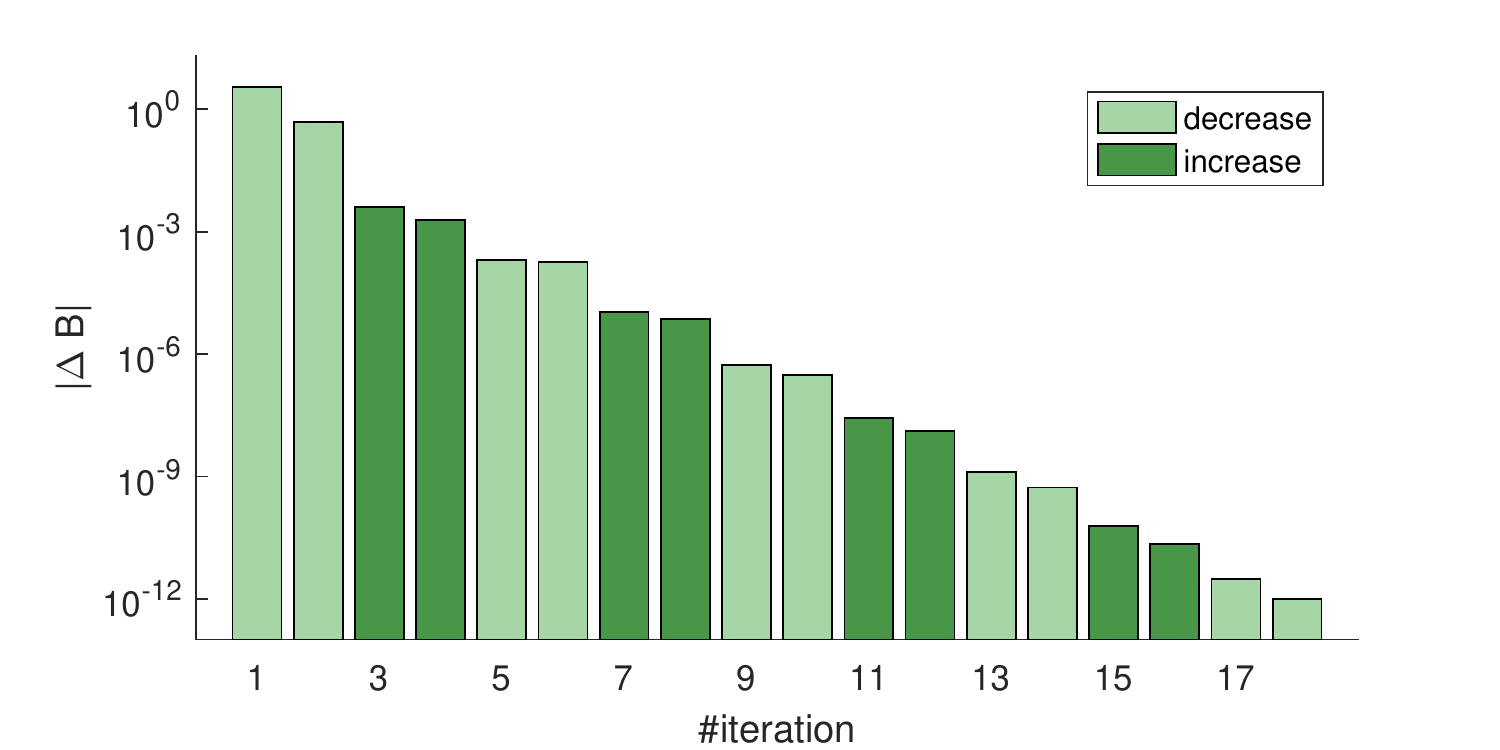}
  	\caption{\textit{Convergence analysis of Algorithm~\ref{alg:bestResponse}.} The change of the average bill of the DSM participants between consecutive iterations is plotted over the iteration number. The ordinate is scaled logarithmically. Bars are coloured according to the algebraic sign of the change. The specific data points stem from a simulation in Section~\ref{sec:parRate_Error} with $64\%$ participation rate and without forecasting errors.}
  	\label{fig:utilIter}
\end{figure}
The condition that needs to be fulfilled to declare equilibrium is stated as `there exists no player $n$ for whom his current action $a_n$ is not a best response to the actions $a_{-n}$ of the other players' (cf.~Algorithm~\ref{alg:bestResponse}, line~\ref{alg:BR_while}). \reviewAdd{Within our specific implementation of \texttt{selma}, the stopping criteria is based on the L2 difference between the action profiles of two consecutive iterations, i.e.~when this difference is smaller or equal to $10^{-15}$ the algorithm breaks out of the loop.}
\begin{figure}[t]
	\centering
  	\includegraphics[width=0.69\textwidth]{./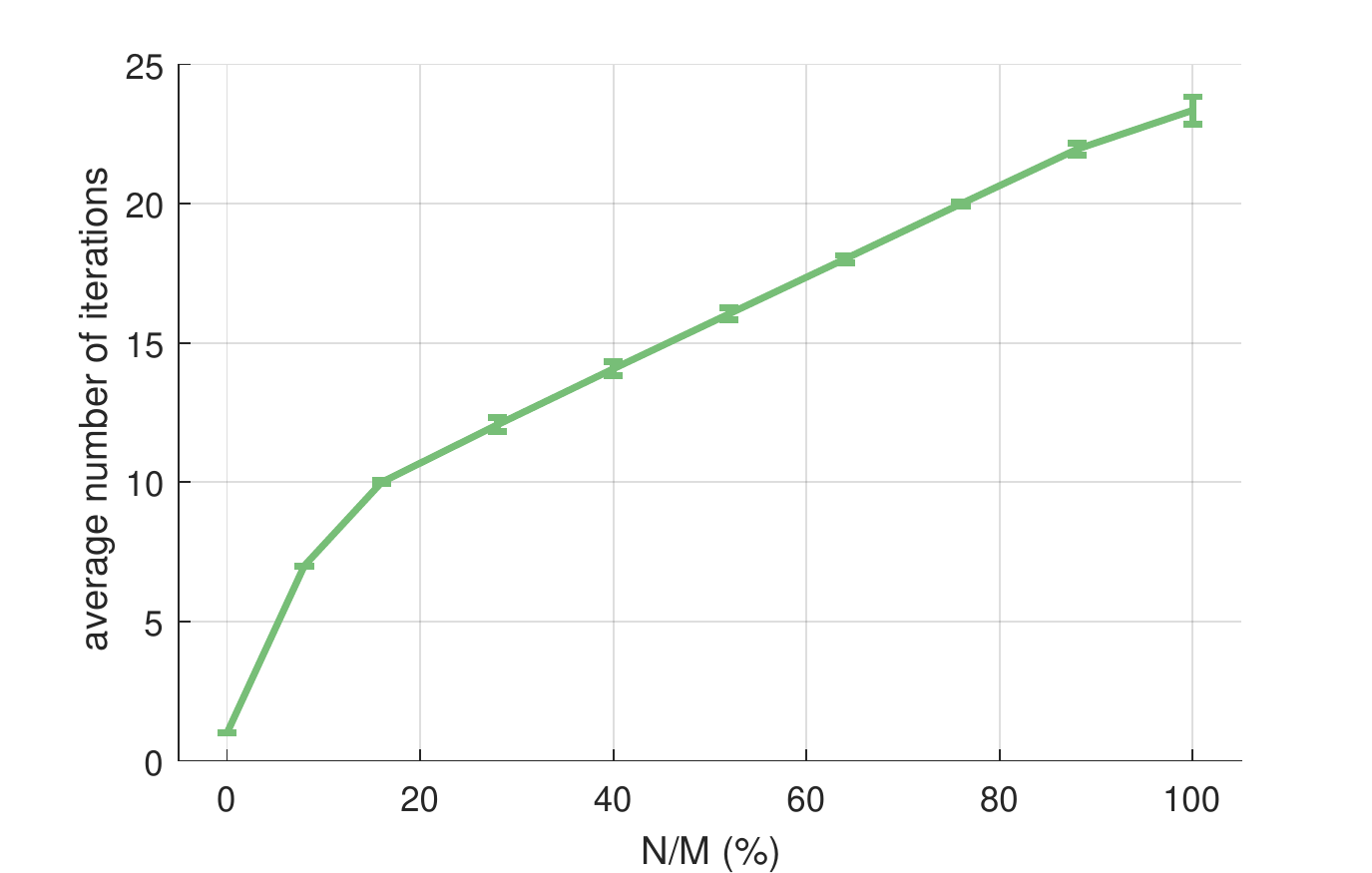}
  	\caption{\textit{Iteration statistics.} The mean number of iterations per day is plotted over the participation rate in per cent. The values stem from the simulations undertaken in Section~\ref{sec:parRate_Error}\reviewDel{, where both the case with and without forecasting error are considered}. In addition to the average over 365 days, the standard deviation is shown for each data point. \reviewDel{The dotted line corresponds to the right-hand axis and shows the difference between the two curves in per cent.}}
  	\label{fig:iterStat}
\end{figure}
Associated with the current action profile during each iteration are also the energy bills for each participant. 
\paragraph{Results:} In Figure~\ref{fig:utilIter}, the absolute change of the average bill $B=\nicefrac{1}{N}\sum_{n}B_n$ (cf.~\eqref{eqn:billDSM}) is shown for a randomly selected day of the simulation shown in Section~\ref{sec:parRate_Error}. To cover the large scale of different changes, a logarithmic representation is chosen. The respective sign of the change is then expressed in the colour of the bar. 

Figure~\ref{fig:iterStat} shows how the number of average iterations per day depends on the number of participants in the DSM scheme. \reviewDel{Furthermore, it reveals the influence of the error on the iteration statistics. }The values are again taken from the simulations in Section~\ref{sec:parRate_Error}.

\paragraph{Discussion:}
\label{sec:discConvergence}
The results give evidence of a correctly working iteration algorithm (cf.~Algorithm~\ref{alg:bestResponse}). From Figure~\ref{fig:utilIter} we see that between any two consecutive iterations, the absolute change of the average electricity bill is monotonically decreasing. Furthermore, we observe that the rate of this decrease is almost linear in the semi-logarithmic plot, hinting towards an exponential relationship. 

Due to the exponential convergence towards \reviewDel{the }\reviewAdd{a }Nash equilibrium, only few iterations are needed to obtain the equilibrium schedules. The specific number of iterations depends on the number of participants taking part in the DSM scheme. This is comparable to the ones shown in~\cite{Nguyen2015}. Figure~\ref{fig:iterStat} shows that the average number of iterations increases monotonically with the number of participants. Moreover, the variation across the number of iterations for individual scheduling periods is small, as shown by the standard deviation. This is a strong result, as it shows that the convergence properties are insensitive to different demand data of the individual participants. \reviewDel{Additionally, when comparing the number of iterations for the two scenarios: (i) with forecasting error for demand and generation and (ii) without any forecasting error, it becomes clear that the convergence behaviour is not influenced. The biggest difference is observed for $100\%$ participation rate, where the difference amounts to approximately $0.5\%$.} \reviewAdd{During experimentations with the code, more than one million games were solved which all converged to a Nash equilibrium.}

The small number of iterations directly translates to small computational times and thus does not hinder a real-world application. Typical 365-day simulation runs take about $30\;$s on a single core of an \texttt{i7-3770S} CPU and require less than 1\;GB of memory. Note that in the real-life scenario, the scheduling process is initiated once before the scheduling period and only needs to calculate the equilibrium schedules for the upcoming day. In summary, we expect no difficulties in implementing a DSM scheme based on our scheduling software \texttt{selma}.

\subsection{Comparison Between a Static and a Dynamic DSM scheme}
\label{sec:resultsComparison}
In~\cite{Pilz2017}, a similar DSM scheme to the one described in Section~\ref{sec:smartGrid} was examined. 
Both are based on a battery scheduling game for households of a neighbourhood served by the same utility company (UC). Their main difference is the underlying game that determines the schedules for the upcoming day. Whereas in this paper we employ a discrete time dynamic game,~\cite{Pilz2017} made use of a simpler non-cooperative static game in which players were only able to choose between four discrete options for each interval. For a more thorough description please see~\cite{Pilz2017}. For the sake of comparison, none of the households is equipped with PV cells.
\begin{figure}[t]
	\centering 
	\subfigure[week 12\label{fig:week12load}]{\includegraphics[trim=0 0 0 0,clip,width=0.81\linewidth]{./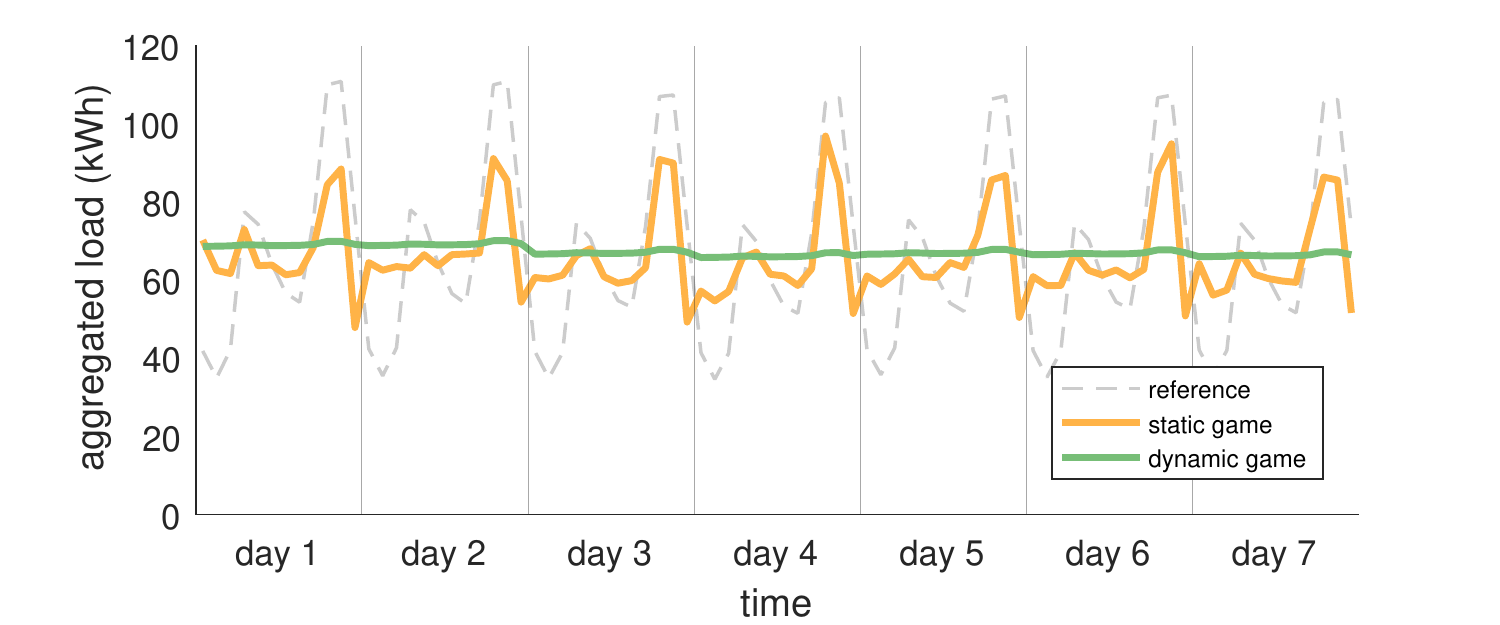}}
	\subfigure[week 38\label{fig:week38load}]{\includegraphics[trim=0 0 0 0,clip,width=0.81\linewidth]{./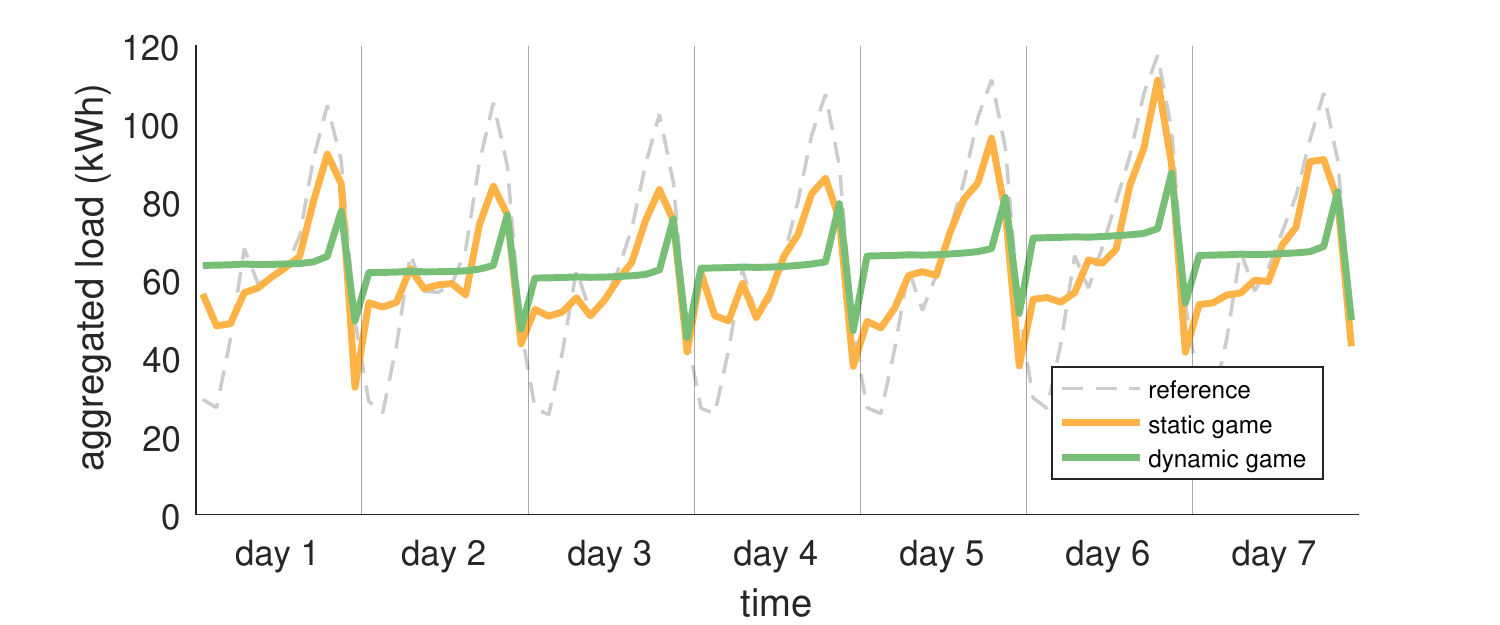}}
	\vspace{-0.5\baselineskip}
	\caption{\textit{Load comparison.} The aggregated load of all households in the neighbourhood is plotted over time. The simulation domains cover seven days, that were each scheduled consecutively. The aggregated demand is given as a reference. The orange curve results from a DSM scheme employing a static scheduling game~\cite{Pilz2017}, while the green one stems from a DSM scheme employing a dynamic game. Other than the underlying game structure, all parameters are identical.}
	\label{fig:LoadComparison}
\end{figure}
\begin{table}
\caption{\textit{PAR comparison.} Peak-to-average ratios calculated as the average over the individual days of week~12, week~25, week~38, and week~51 for the case without storage system (Reference) and both underlying games of the DSM scheme. $\mu$ gives the average over all four weeks. Static: game employed in~\cite{Pilz2017}; Dynamic: game described in Section~\ref{sec:game}. The values in parentheses represent the standard deviation.}
	\label{tab:comparePAR}
	\centering
	\begin{tabular}{ccC{2.3cm}C{2.3cm}C{2.3cm}}
		\toprule
		&	& \textbf{Reference} & \textbf{Static} & \textbf{Dynamic} \\
		\midrule
    	\multirow{4}{*}{\textbf{Period}} & week 12 & $1.623$ $(0.005)$ & $1.374$ $(0.070)$ & $1.013$ $(<\!0.001)$ \\
    	&week 25 & $1.574$ $(0.033)$ & $1.410$ $(0.035)$ & $1.198$ $(0.016)$ \\
    	&week 38 & $1.685$ $(0.031)$ & $1.439$ $(0.080)$ & $1.231$ $(0.015)$ \\
    	&week 51 & $1.718$ $(0.037)$ & $1.468$ $(0.082)$& $1.015$ $(0.001)$ \\ \midrule
    	&$\mu$ & $1.650$ $(0.064)$ & $1.423$ $(0.040)$ & $1.114$ $(0.117)$\\
	\bottomrule
	\end{tabular}
\end{table}

In this subsection, we compare the two approaches with respect to their success in reducing the PAR of the aggregated load. To this end, the same parameters for each household and also the same demand data are used. Households do not have the capability of on-site generation, but are equipped with the same batteries (cf.~Table~\ref{tab:batPara}). The upcoming day is divided into $T=12$ intervals and we assume $N=M=25$, i.e.~every household takes part in the DSM scheme. As in~\cite{Pilz2017}, we simulate full weeks by using the state-of-charge (SOC) values of the batteries at the end of the scheduling period as the initial configuration for the following one. 
\paragraph{Results:} Figure~\ref{fig:week12load} and Figure~\ref{fig:week38load} show the aggregated load curves achieved by the DSM schemes for forecasts given by week 12 and week 38 of the demand data set~\cite{Openei2013a}, respectively.
For completion, we also simulated week~25 and week~51 as done in~\cite{Pilz2017}. A summary of the achieved results can be seen in Table~\ref{tab:comparePAR}.
\begin{figure}[t]
	\centering 
	\subfigure[static scheduling game\label{fig:comp_discrete}]{\includegraphics[trim=0 0  0 0,clip,width=0.45\linewidth]{./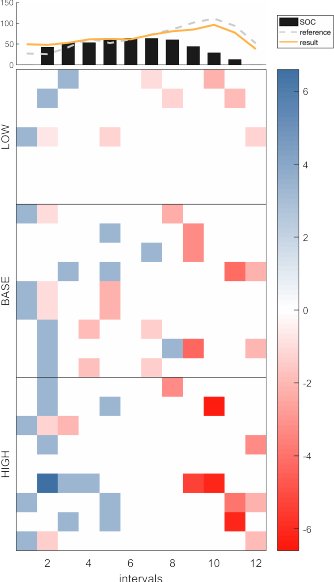}}
	\subfigure[dynamic scheduling game\label{fig:comp_dynamic}]{\includegraphics[trim=0 0 0 0,clip,width=0.45\linewidth]{./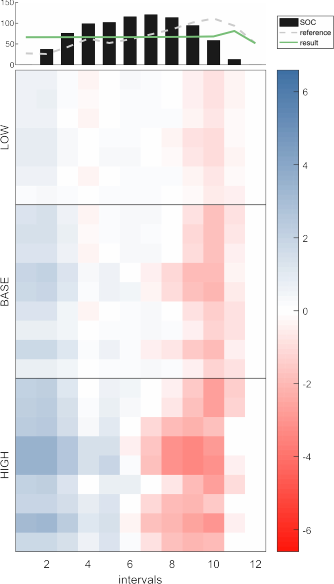}}
	\vspace{-0.5\baselineskip}
	\caption{\textit{Nash equilibrium schedule comparison.} The schedules for all participating households of the demand-side management scheme for a single scheduling period (week 38, day 5, cf.~\cite{Pilz2017}) are shown together with their respective aggregated load and aggregated state-of-charge (SOC). (a) The underlying game structure is a static non-cooperative game from~\cite{Pilz2017}. Within each interval, players can choose between four discrete decisions. (b) Here, the game structure is the dynamic game introduced in Section~\ref{sec:game}. Note that the schedules employ the same scaling.}
	\label{fig:comparison}
\end{figure}

On average, a $14\%$ and a $32\%$ decrease of the PAR value was achieved by the static and the dynamic games, respectively. To understand the differences of the outcomes, we explicitly look at the schedules that are obtained in the NE of the respective games. Figure~\ref{fig:comparison} shows these schedules exemplarily for day 5 of week 38 (Figure~\ref{fig:week38load}, cf.~Figure 3 in~\cite{Pilz2017}) together with the aggregated load and aggregated SOC above it. Each row illustrates the equilibrium schedule of one household. 

\paragraph{Discussion:}
Comparing the aggregated load curves (cf.~Figure~\ref{fig:LoadComparison}) shows that a DSM scheme based on a dynamic game can achieve an almost flat profile. Nevertheless, depending on the given data, the outcome of the scheduling is subject to a finite-horizon effect. Empirically, we observe peaks and troughs at the end of the scheduling period if the demand for the final interval is lower than the average demand of the whole day. This indicates that the starting time of the DSM scheme has an influence on the achievable outcome. Nonetheless, this parameter is fixed through the DSM scheme protocol, thus asking for alternative solutions to the finite-horizon effect. Future work will aim to eliminate the influence of the starting time altogether.

In Table~\ref{tab:comparePAR}, we observe that on average the dynamic game reduces the PAR value more than twice as much as the static game. However, with respect to the individual weeks the static game shows a smaller standard deviation of 0.04 and thus seems to be more consistent. Its achieved reductions are all between $10.4\%$ -- $15.3\%$, while the range of reductions by the DMS scheme with the dynamic game is $23.9\%$ -- $40.9\%$. The differences with respect to the standard deviations is again owed to the finite-horizon effect. It is also present in the case with the static game, but due to generally worse outcome, does not alter it as much as the results of the dynamic scheduling game.

We can further understand the differences between the static and dynamic game from Figure~\ref{fig:comparison}. The restriction to four discrete options for each interval in the static case, i.e.~(i) remain idle, (ii) charge half interval, (iii) charge full interval, and (iv) use battery, results in a majority of intervals where the battery remains idle. This is because of a lack of incentive to charge the battery by the two given amounts. In the dynamic game, players can choose to charge their battery from a continuous spectrum of decisions in a given interval. This difference becomes most apparent when looking at the aggregated SOC of all participants. Whereas the maximal SOC in the static case is approximately $64\;$kWh, almost twice as much ($120\;$kWh) is charged in the dynamic case. In summary, it shows that the increased flexibility of the dynamic game is better suited to minimise the PAR of the aggregated load.

\reviewAdd{Note that all these comparisons allow for strong conclusions as they are based on the identical data set~\cite{Openei2013a} and also all the other parameters, such as number of players $N$, number of time intervals $T$, etc. are chosen to be the same. Nevertheless, comparisons to other results in the literature are possible: Compared to the work by Nguyen\etal~\cite{Nguyen2015}, a better PAR reduction is achieved while also the number of iterations to obtain the equilibrium solution is lower by two orders of magnitude. Similarly, the PAR reduction of Yaagoubi\etal~\cite{Yaagoubi2015b} is worse than the approach shown in this manuscript. As they schedule not only the battery but also shift other household appliances, a comparison of the computational costs is not appropriate.}

\subsection{Influence of Participation Rate and Forecasting Errors}
\label{sec:parRate_Error}
The question of how many participants are needed to obtain considerable gains in terms of PAR reduction and savings is important. 
\begin{figure}[th]
	\centering
  	\includegraphics[width=0.79\textwidth]{./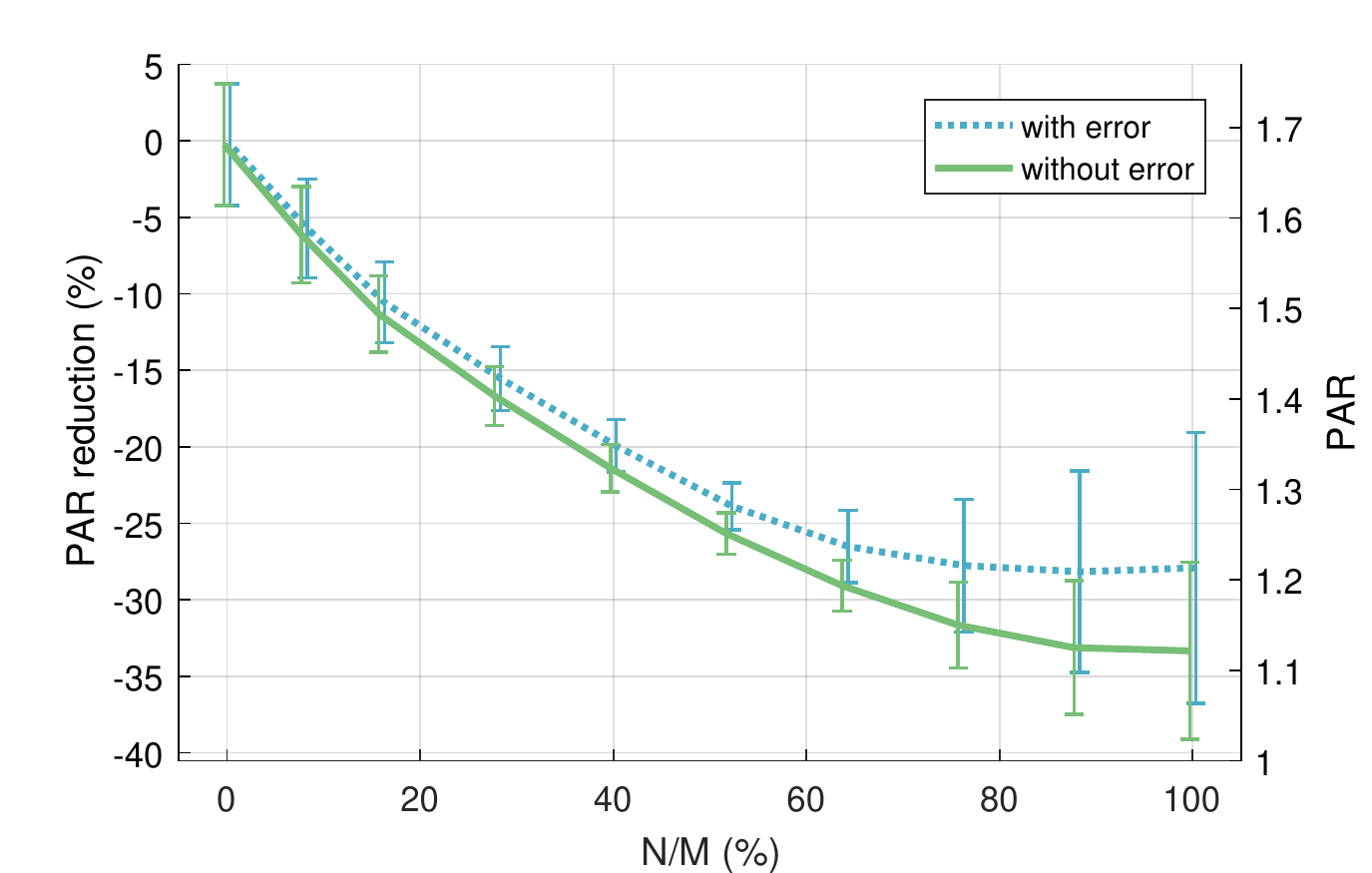}
  	\caption{\textit{Peak-to-average ratio (PAR) reduction dependency on the participation rate.} The mean PAR reduction in per cent is plotted over the participation rate in per cent. The right-hand axis shows the absolute values of the PAR. In addition to the average over 365 days, the standard deviation is shown for each data point. The simulations were run for a scenario with forecasting errors and one without forecasting errors. Note that the data points are slightly shifted along the abscissa to increase readability.}
  	\label{fig:PARvsNM}
\end{figure}
\begin{figure}
	\centering
  	\includegraphics[width=0.79\textwidth]{./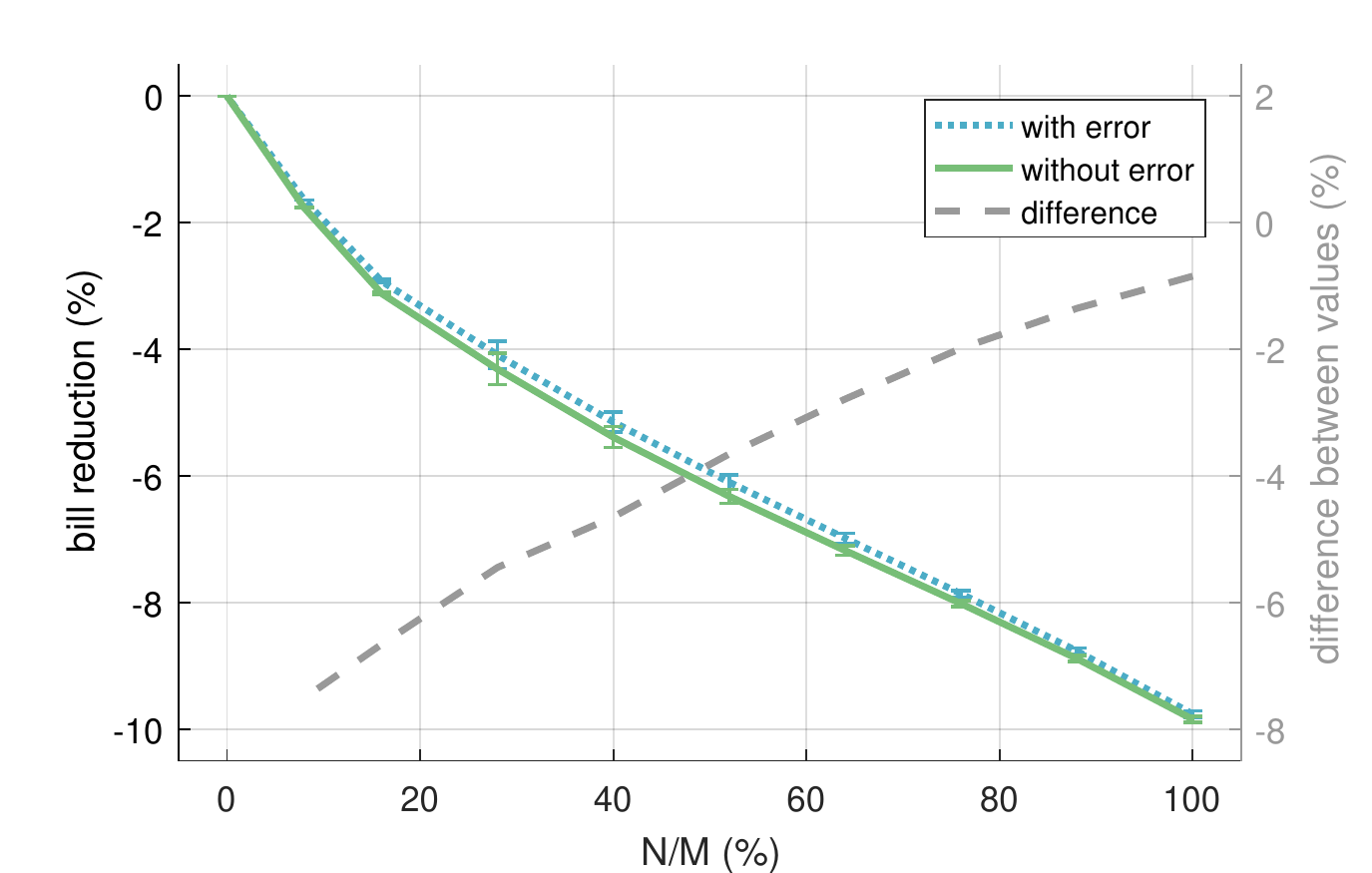}
  	\caption{\textit{Savings dependency on the participation rate.} The mean bill reduction in per cent for participants of the demand-side management scheme are plotted over the participation rate in per cent. In addition to the average over 365 days and participants, the standard deviation between different participants is shown for each data point. The simulations were run for a scenario with forecasting errors and one without forecasting errors. The difference between the two curves is plotted against the right-hand axis.}
  	\label{fig:savingVsNM}
\end{figure}
Moreover, within this subsection the robustness with respect to the forecasting errors (cf.~Section~\ref{sec:error}) is shown. To do so, we assume the forecasting error for the demand to be $\epsilon_d = 8\%$ for every household~\cite{Bichpuriya2016}\reviewAdd{, which could be obtained from a forecast performed by an artificial neural network, and is approximately 2.5 times higher than the best forecast obtained by them}. This is independent of whether the household participates in the DSM scheme or not. The forecasting error for the solar generation is set to $\epsilon_w = 10\%$ in accordance with~\cite{Dolara2015,Rana2016}. \reviewAdd{Rana\etal~\cite{Rana2016} make use of a neural network and clustering of weather data to forecast half hourly solar power output for the upcoming day.} Note that only participants of the DSM scheme are equipped with PV cells and thus subject to the forecasting error. The values are taken to represent a worst-case scenario. Subsequently, any real-world scheduling result should fall in the interval between the worst-case outcome and the respective outcome without any forecasting error.

We simulate a full year and average over the obtained PAR values for the individual days. All participants are equipped with a lithium-ion battery (cf.~Table~\ref{tab:batPara}) and a solar cell. The size of the PV installation depends on the user's category. For LOW, BASE, and HIGH consumers, we use $p_n=0.3$, $p_n=0.5$, and $p_n=0.7$, respectively. Starting with all 25 households taking part in the DSM scheme, we eliminated three users, i.e.~one randomly selected from each consumer category, in each subsequent run. Non-participant still exhibit the specified forecasting error for their demand. 

\paragraph{Results:} Figure~\ref{fig:PARvsNM} shows the reduction of the PAR value over the rate of participating consumers for the scenarios with and without forecasting errors. It includes not only the mean values, but also the standard deviation. Note that we slightly shifted the results for both runs along the abscissa to increase readability. An additional axis on the left indicates the absolute PAR values.
Whereas the PAR reduction is the interest of the UC, the financial rewards, i.e.~savings off the energy bill, are the interests of the participants of the DSM scheme. Figure~\ref{fig:savingVsNM} shows the average saving per day for all participants both with and without forecasting error. For further insight, it also illustrates the difference between the two curves. 

\paragraph{Discussion:} Although a worst-case scenario is simulated, the outcome with respect to PAR reduction (cf.~Figure~\ref{fig:PARvsNM}) and electricity bill (cf.~Figure~\ref{fig:savingVsNM}) reduction show considerable gains for the UC and the participants of the DSM scheme. 

Without forecasting error the PAR reduction monotonically improves with the proportion of the participants. This stands in contrast to the results shown in~\cite{Soliman2014}, where a minimum is reached at medium range participation rate. In comparison to other studies, such as~\cite{Nguyen2015,Longe2017}, we conclude that our dynamic game performs as good as their respective scheduling approach. At 100$\%$ participation rate, a reduction of $-33.3\%\ (5.8\%)$ is achieved, in agreement with the results shown in Section~\ref{sec:resultsComparison}. It should be noted that a perfectly flat load profile corresponds to an approximately $-40\%$ reduction of the PAR. Thus the outcome is close to the theoretical optimum. When looking at the standard deviation, we observe that it is lowest for the simulation run with 52$\%$ participation rate and increases towards both ends of the spectrum. On the lower end of participation rate the fluctuations of the PAR value for different days is just an artefact of the data set in use. Small numbers of participants have not enough influence on the overall neighbourhood to change this. When regarding large participation rates, the PAR value is considerably reduced. The increase of the standard variation for these runs stem directly from the finite-horizon effect already discussed in Section~\ref{sec:resultsComparison}. 

\begin{figure}[thb]
	\centering
  	\includegraphics[width=0.79\textwidth]{./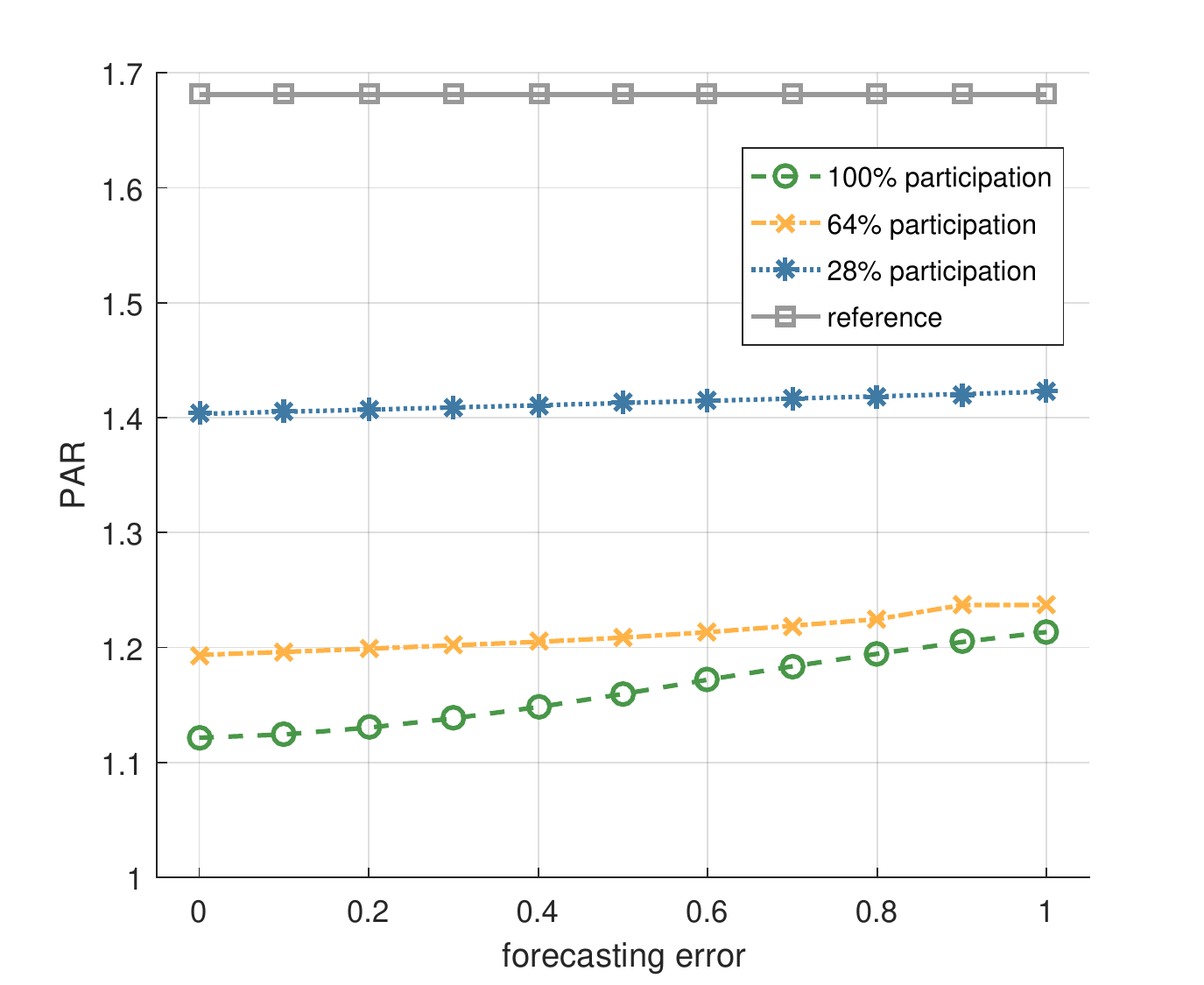}
  	\reviewAdd{\caption{\textit{Influence of forecasting error on the achieved PAR value.} The PAR values for three different participation rates are plotted over the forecasting error. Here the abscissa refers to the worst-case scenario as described in Section~\ref{sec:parRate_Error}, e.g. at 0.6 a forecasting error with the magnitude of 60$\%$ of the worst-case is assumed. In the reference case no game is played, so the forecasting error does not influence the PAR value.}}
  	\label{fig:PARvsError}
\end{figure}

The results for runs with forecasting errors follow the results without errors closely. \reviewAdd{Figure~\ref{fig:PARvsError} shows how the forecasting error affects the achieved PAR values for three selected participation rates when transitioning from the perfect forecast to the worst-case scenario as depicted in Figure~\ref{fig:PARvsNM}.} For low participation rates the difference is negligible but starts to increase when more households participate in the DSM scheme. Nevertheless, even in the worst-case scenario, a reduction of $-27.8\%\ (8.9\%)$ is achieved at 100$\%$ participation rate \reviewAdd{(cf.~Figure~\ref{fig:PARvsNM})}. With respect to the standard deviation, we again recognise similarities to the runs without forecasting errors. Smallest variations in the PAR reduction are obtained for participation rates around $50\%$, while we again see increasing variations at high participation rates. Here, the increase is distinctly larger than in the other runs. The reason behind this difference is directly explained by the forecasting error. As more participants join the DSM scheme, the absolute amount of deviation from the actual demand and production is increasing. 

It is worth noting that the result for a participation rate of $76\%$, i.e. a reduction of $-27.7\%\ (4.4\%)$ \reviewAdd{(cf.~Figure~\ref{fig:PARvsNM})}, are very promising from a practical point of view. The UC might not be able to convince everybody to participate in the DSM scheme, but can still gain reductions of the PAR value close to what is achievable at maximum participation.

\reviewAdd{In~\cite{Pilz2018a} it is investigated what happens when no forecast is calculated. Rather than calculating the forecast, the demand data of the current day is used as the input to the dynamic game that determines the schedules for the next day. It turns out that this approach is oversimplified and results in distinctly worse outcomes (cf.~\cite[Figure 5]{Pilz2018a}), i.e.~it shows the importance of a suitable forecasting mechanism.}\\

The savings that participants of the DSM scheme can gain increase monotonically with the share of participants. Furthermore, we observe that the variations between different participants is negligible. This is due to the particular proportional billing scheme employed in the scheme (cf.~Section~\ref{sec:utility}). It ensures fairness in the sense that LOW and HIGH consumers can gain equally by signing up for the DSM scheme. The difference between runs with and without forecasting errors reveals that the forecasting error does not influence the bill reduction to a great extent. Since the two curves are almost non-separable to the unaided eye, the difference is shown in the same plot (cf.~Figure~\ref{fig:savingVsNM}). It becomes clear that the difference is actually decreasing for larger numbers of participants. 

This highlights that the dynamic scheduling game ensures robust and beneficial results for the participants of the DSM scheme, even in the worst-case scenario.

\subsection{Consumer Type Dependency}
\label{sec:resultsConsumer}
\begin{figure}[thb]
	\centering
  	\includegraphics[width=0.79\textwidth]{./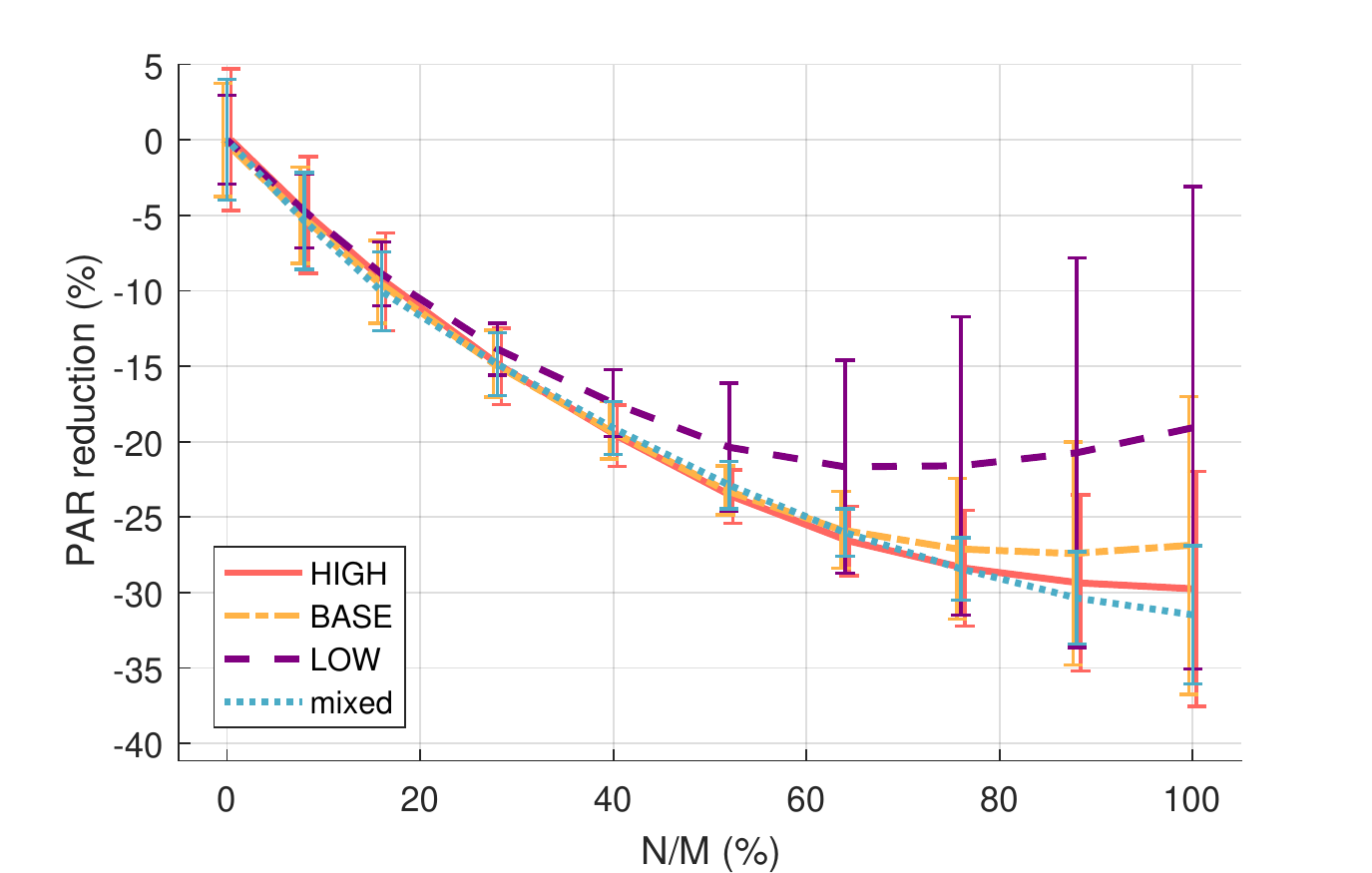}
  	\caption{\textit{Peak-to-average ratio (PAR) reduction for different neighbourhoods.} The mean PAR reduction in per cent is plotted over the participation rate in per cent for different mono-type consumer neighbourhoods. In addition to the average over 365 days, the standard deviation is shown for each data point. For comparison, the results of a mixed neighbourhood (cf.~Figure~\ref{fig:PARvsNM}) are also presented. Note that the data points are slightly shifted along the abscissa to increase readability.}
  	\label{fig:comparison_PARcons}
\end{figure}
\begin{figure}[thb]
	\centering
  	\includegraphics[width=0.79\textwidth]{./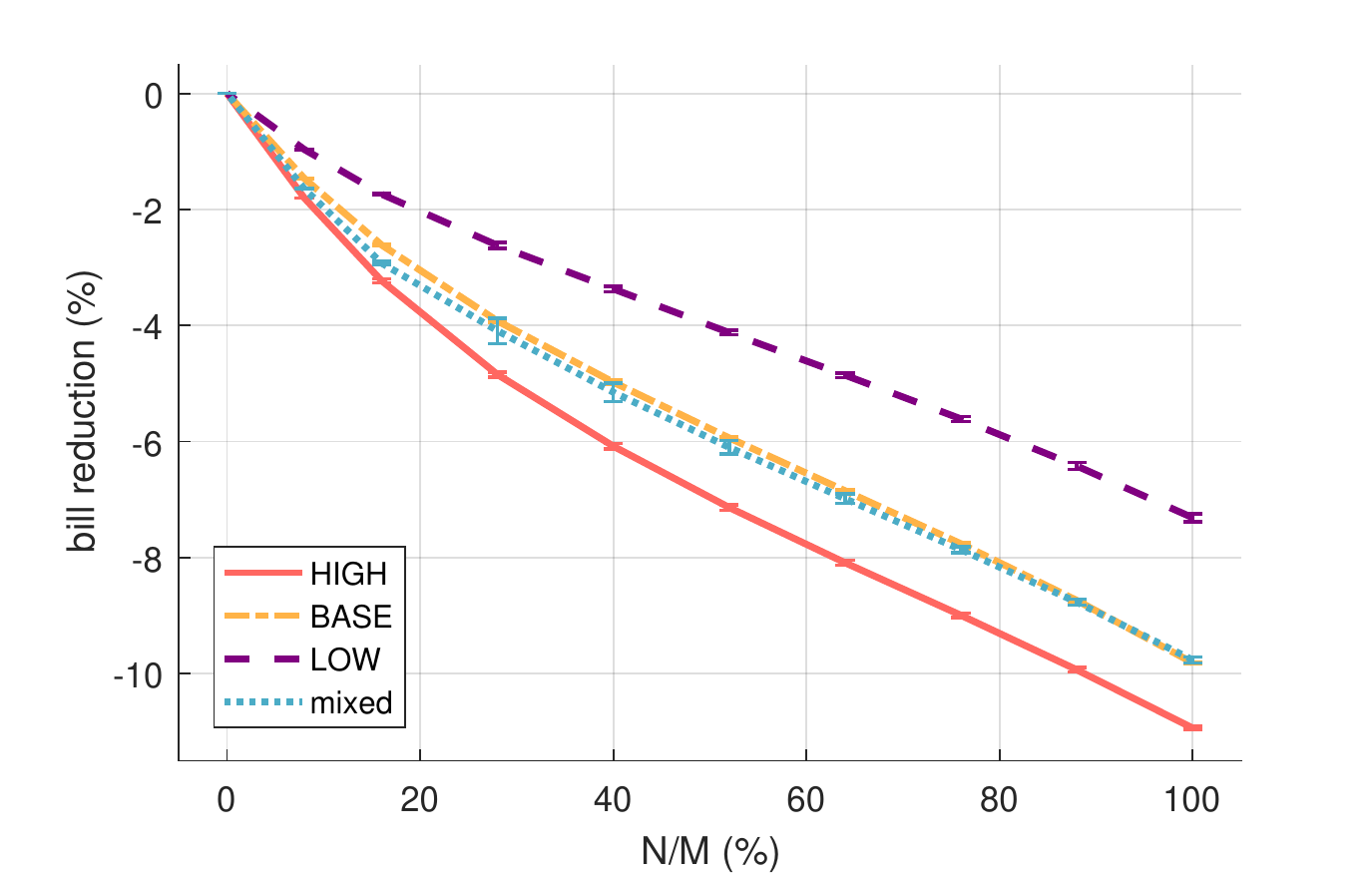}
  	\caption{\textit{Savings for different neighbourhoods.} The mean bill reduction in per cent for participants of the demand-side management scheme are plotted over the participation rate in per cent for different mono-type consumer neighbourhoods. In addition to the average over 365 days and participants, the standard deviation between different participants is shown for each data point. For comparison, the results of a mixed neighbourhood (cf.~Figure~\ref{fig:savingVsNM}) are also presented.}
  	\label{fig:comparison_bill}
\end{figure}
\paragraph{Results:} The results in Section~\ref{sec:resultsComparison} and Section~\ref{sec:parRate_Error} are all based on a neighbourhood consisting of a mix of the three different consumer types (LOW, BASE, HIGH). Figure~\ref{fig:comparison_PARcons} shows the possible PAR reductions for mono-type neighbourhoods. To allow for comparison $M=25$ is kept constant. Furthermore, we use the same forecasting errors of $\epsilon_d=8\%$ and $\epsilon_w=10\%$ for the demand and renewable energy generation, respectively (cf.~Section~\ref{sec:parRate_Error}). All the simulations consider a scheduling period of a full year. 

We also calculated the average savings that are achieved by the participants of the DSM scheme. These results are presented in Figure~\ref{fig:comparison_bill} together with the reference of a mixed neighbourhood (cf.~Figure~\ref{fig:savingVsNM}) with forecasting errors.

\paragraph{Discussion:} 
\begin{figure}[thb]
	\centering
  	\includegraphics[width=0.69\textwidth]{./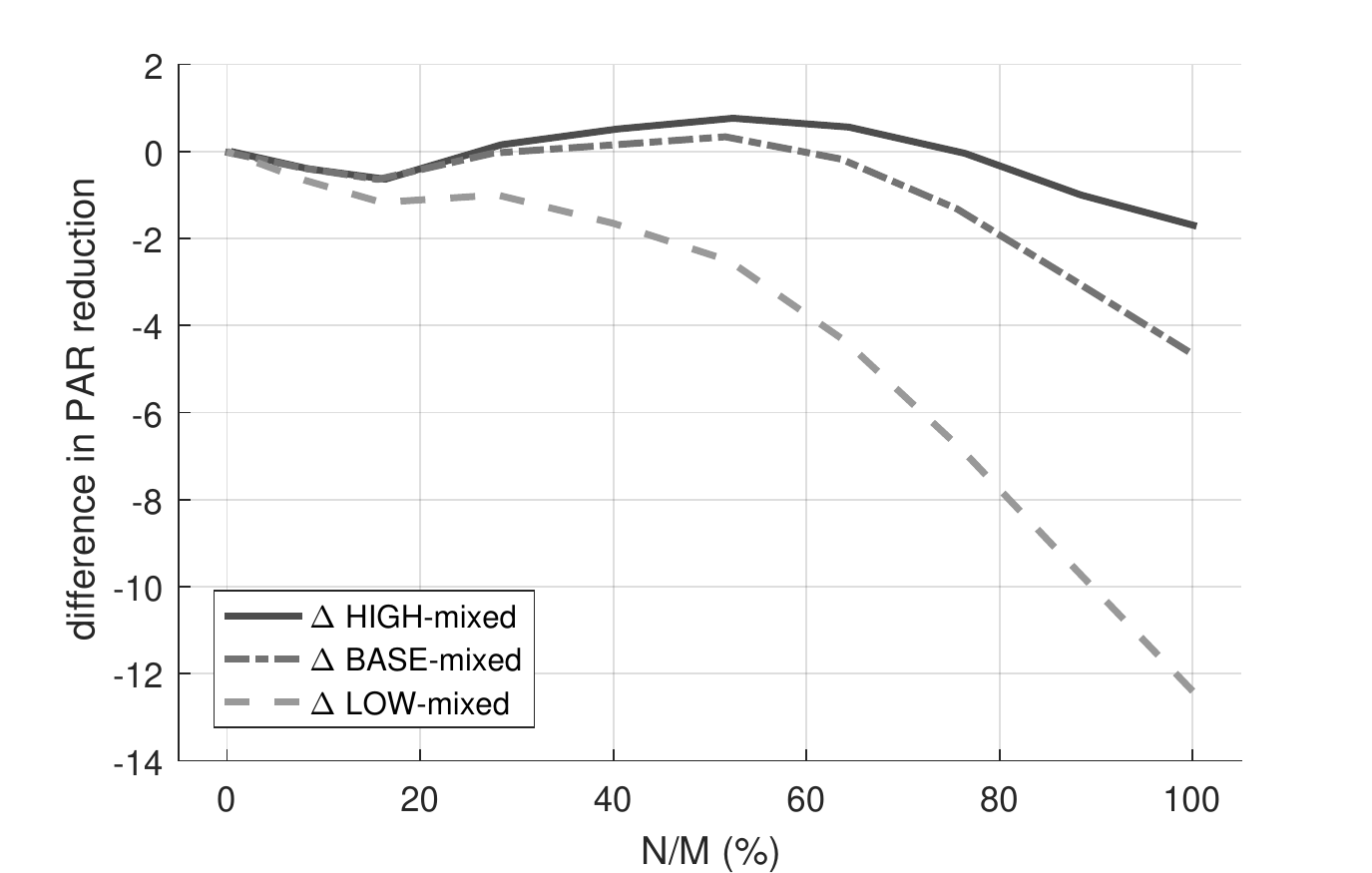}
  	\caption{\textit{Differences between neighbourhoods in peak-to-average ratio (PAR) reduction.} The values shown are based on the results represented in Figure~\ref{fig:comparison_PARcons}.}
  	\label{fig:discComparison_PARcons}
\end{figure}
When comparing different compositions of neighbourhoods, we can gain further insight into the conditions for which the DSM scheme works most efficiently. At first glance, Figure~\ref{fig:comparison_PARcons} reveals that given a low rate of participants in the scheme, the actual type of consumer is not crucial. Figure~\ref{fig:discComparison_PARcons} shows the difference between the respective results for mono-type neighbourhoods and a mixed neighbourhood. A closer look shows that mono-LOW communities are always worse in reducing the PAR value of the aggregated load than any of the other ones. The results in terms of both the mean PAR reduction and the standard deviation get even worse with more than two thirds of households participating in the scheme. Similar observations for mono-type neighbourhoods are found in~\cite{Soliman2014}.

For both mono-BASE and mono-HIGH neighbourhoods it can be observed that they perform better ($<1$\%) in an interval of medium participation rate than the mixed neighbourhood. Nevertheless, at $N=M$ the obtained PAR reduction is smaller by 1.8$\%$ and 4.5$\%$, respectively. Considering the variation of these mean PAR reduction values, it becomes clear that it is most beneficial to have a mixed-consumer neighbourhood.\\

Figure~\ref{fig:comparison_bill} shows the average bill reduction for the participants of the DSM scheme for different participation rates. Generally, they show the same behaviour already observed in Figure~\ref{fig:savingVsNM}. The influence of the proportionality factor in the billing scheme \eqref{eqn:billDSM} is clearly visible. Although the mixed-consumer neighbourhood achieves better PAR reduction, the average savings are almost identical to a mono-BASE neighbourhood. A neighbourhood that purely consists of HIGH consumers can save about 11$\%$ off the energy bill and is consistently most rewarding for the participants independent of the participation rate.


\section{Conclusion}
\label{sec:5_conclusion}
In this paper, we propose\reviewDel{d} a demand-side management (DSM) scheme based on a discrete time dynamic game. Its purpose is to reduce the peak-to-average ratio (PAR) of the aggregated electricity load by scheduling the usage of individually owned (lithium-ion) energy storage systems. The utility company running the scheme, incentivises users to take part by offering fair financial benefits. To ensure realistic outcomes, an advanced battery model is employed. Furthermore, the integration of local energy generation in form of photovoltaic cells is taken into account. 

The DSM scheme is suitable for real-world implementation for four reasons: Firstly, it is based on a complete model of the neighbourhood including storage systems, local energy generation, and crucially forecasting errors of both demand and generation.
Secondly, computational costs to obtain schedules for the upcoming period are small and require only little amounts of memory. This was achieved by deriving a closed form solution for the best-response problem of an individual player. The ensuing iterative algorithm seems to converge exponentially towards \reviewDel{the }\reviewAdd{a }Nash-equilibrium and thus obtains the strategy profiles for one scheduling period in a fraction of a second.
Thirdly, the resulting schedules are robust with respect to the worst-case forecasting errors. Whereas the error weakens the effect of the PAR reduction by $\leq5.5\%$, the corresponding savings off the energy bill for the participants of the scheme are hardly changed.
Fourthly, we provide evidence that a neighbourhood that consists of various types of consumers performs best in such a DSM scheme. Since a mixed community is more probable than a mono-type community, this is a promising result.

A direct \reviewAdd{and in-depth }comparison to a DSM scheme with an underlying static game, revealed the advantages of the dynamic game approach. Players are overall more active and thus able to achieve distinctly better results. \reviewAdd{Further comparisons with the literature in terms of PAR reduction and computational costs show the superiority of our approach.}

\paragraph{\reviewAdd{Future Work:}}
In future work, we plan to corroborate our results with \reviewDel{a full probabilistic analysis of the forecasting errors. Moreover, we will extend the game--theoretic model to be able to deal with the finite-horizon effect. This will eventually lead to a scheduling mechanism which is insensitive to the starting time of the protocol.}\reviewAdd{an even more sophisticated approach to treat the uncertainties caused by the forecasts of demand and renewable energy generation. There are two main approaches that can be considered: (i) Robust (finite) game theory, which can be seen as a generalised version of a Bayesian game. (ii) A two step approach that first determines day-ahead schedules and then refines them throughout the day by using most recent data. The latter might be realised in a sliding window framework which would potentially eliminate the finite-horizon effects that can be encountered in our solutions. Furthermore, one could also think about modelling risks associated with the uncertainties directly in the utility function by means of the conditional value-at-risk measurement.

Whereas the current approach investigates a rather small community of households, it is worth to also explore the other end of the spectrum when the number of players becomes large. The method of choice here is mean field game theory in which the behaviour of the system is examined in the limit of an infinite number of players.}

\begin{acknowledgements}
This work was supported by the Doctoral Training Alliance (DTA) Energy. The authors want to thank Jean-Christophe Nebel and Eckhard Pfluegel for helpful discussions. \reviewAdd{Furthermore, the authors want to thank the anonymous reviewers for their constructive feedback that helped to improve the manuscript.}
\end{acknowledgements}

\begin{spacing}{0.80}
\noindent{\small\rmfamily \textbf{Author Contributions}\ \;Matthias Pilz and Luluwah Al-Fagih conceived and designed the system; Matthias Pilz performed the theoretical analysis; Matthias Pilz implemented the software; Matthias Pilz performed the simulations; Matthias Pilz analysed the data; Matthias Pilz wrote the paper.}
\end{spacing}
\vspace{\baselineskip}
\begin{spacing}{0.80}
\noindent{\small\rmfamily \textbf{Conflicts of Interest}\ \;The authors declare no conflict of interest.}
\end{spacing}

\bibliographystyle{spmpsci}


\renewcommand{\thesection}{\Alph{section}}
\setcounter{section}{1}
\section*{APPENDIX}
\label{sec:appendix}
\subsection{\reviewAdd{Proof of Theorem~\ref{thm:subgameNE}}}
\label{sec:thmProof}
We prove the theorem by contradiction. Suppose $\hat{a}^{t,T-1}$ is not a Nash equilibrium to the subgame $\left\{ G_1^{T-t},\dots,G_N^{T-t}\right\}$. Then, for some $n\in\mathcal{N}$, there must exist another strategy $\bar{a}_n^{t,T-1}$ with the corresponding sequence of states $\left\{\bar{s}_n^\tau\right\}_{\tau=t}^T$ such that 
	\begin{equation*}
		U_n^{T-t}\left(\left(\bar{s}_n^t,\hat{s}_{-n}^t\right), \left(\bar{a}_n^{t,T-1}, \hat{a}_{-n}^{t,T-1}\right)\right) \reviewDel{<}\reviewAdd{>} U_n^{T-t}\left(\hat{s}^t, \hat{a}^{t,T-1}\right)
	\end{equation*}
	Therefore, we obtain 
	\begin{align*}
	U_n\left(s_n^0, \left(\hat{a}^{0,t-1}, \left(\bar{a}_n^{t,T-1}, \hat{a}_{-n}^{t,T-1}\right)\right)\right) &= U_n^{T-t}\left(\left(\bar{s}_n^t,\hat{s}_{-n}^t\right), \left(\bar{a}_n^{t,T-1}, \hat{a}_{-n}^{t,T-1}\right)\right) \\ 
		&\hspace{0.4cm}\reviewDel{+}\reviewAdd{-} \sum_{\tau=0}^{t-1}g_n^\tau\left(\hat{s}_n^\tau, \left(\hat{a}_n^\tau, \hat{a}_{-n}^\tau\right)\right)\\
		&\reviewDel{<}\reviewAdd{>} U_n^{T-t}\left(\hat{s}^t, \hat{a}^{t,T-1}\right) \reviewDel{+}\reviewAdd{-} \sum_{\tau=0}^{t-1}g_n^\tau\left(\hat{s}_n^\tau, \left(\hat{a}_n^\tau, \hat{a}_{-n}^\tau\right)\right)\\
		&= U_n\left(\hat{s}^0,\hat{a}^{0,T-1}\right)
		= U_n\left(\hat{s}^0,\hat{a}\right)
	\end{align*}
	That is in contradiction to our assumption that $\hat{a}$ is a Nash equilibrium for the game $\left\{G_1, \dots, G_N\right\}$. Consequently, our assumption about $\hat{a}^{t,T-1}$ is proved to be false. Thus $\hat{a}^{t,T-1}$ indeed comprises a Nash equilibrium of the subgame $\left\{ G_1^{T-t},\dots,G_N^{T-t}\right\}$. \qed

\subsection{Battery Justification}
\label{sec:batteryJust}
Various companies produce home energy storage systems. To name just a few there are Mercedes, Tesla, BMW, Nissan and Powervault. Some of them are specialised in second life batteries taken from their electric cars, while others (such as Tesla) produce these batteries for their special purpose. As most manufacturers provide technical data sheets, we were able to run our simulations for the demand-side management scheme assuming that households are equipped with different batteries. 
\begin{figure}[bth]
	\centering
  	\includegraphics[width=0.79\textwidth]{./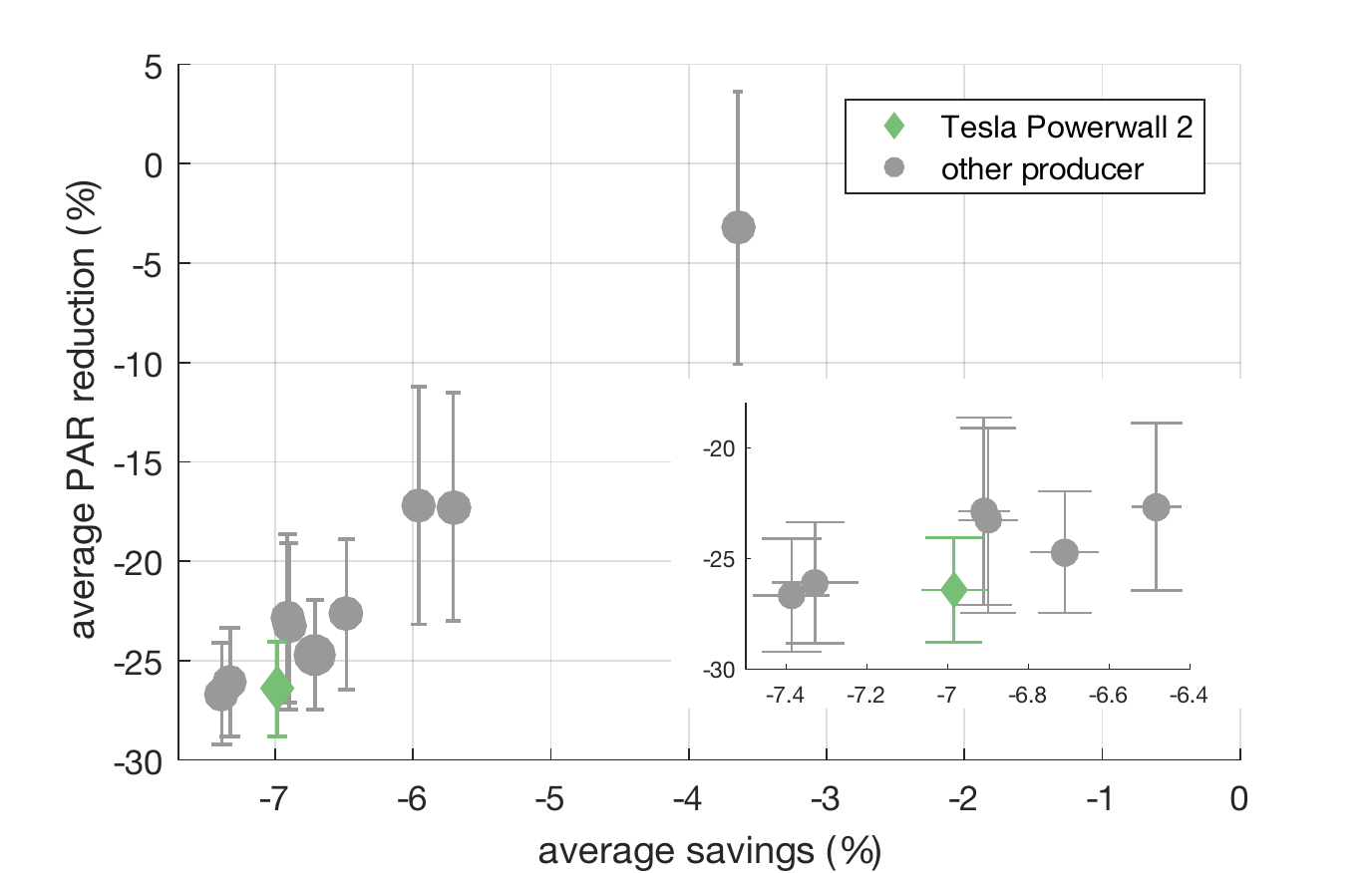}
  	\caption{\textit{Battery Justification.} The mean peak-to-average ratio reduction of the aggregated load over 365 days is plotted over the mean savings off the electricity bill for participants of the demand-side management scheme with different battery systems. For all runs we assumed $N/M=76\%$, $\epsilon_d=8\%$, $\epsilon_w=10\%$, and the same pricing parameter as introduced in Section~\ref{sec:results}. A close-up of the bottom left-hand corner is shown in a subplot. For all data points, we also provide the standard deviation in both variables.}
  	\label{fig:batteryJust}
\end{figure}
The results in Figure~\ref{fig:batteryJust} stem from scenarios with 76$\%$ participation rate, forecasting errors as used in Section~\ref{sec:parRate_Error}, and all participants with the exact same battery model. This is not supposed to compare different systems, but rather to show, that this battery (also employed in~\cite{Pilz2017}) can be taken as a representative of state-of-the-art technology. In this particular simulation run, it achieves a peak-to-average ratio reduction similar to the best in the field. Also the savings off the energy bill are close to the best competitors.

\end{document}